\documentclass[journal]{IEEEtran}

\usepackage{url}
\usepackage{bbm}
\usepackage{cite}
\usepackage{array}
\usepackage{xcolor}
\usepackage{comment}
\usepackage{enumitem}
\usepackage{makecell}
\usepackage{booktabs}
\usepackage{colortbl}
\usepackage{verbatim}
\usepackage{graphicx}
\usepackage{textcomp}
\usepackage{stfloats}
\usepackage{flushend}
\usepackage{algorithm}
\usepackage{algorithmic}
\usepackage{amsmath, amssymb, amsfonts, mathtools}
\usepackage[protrusion=false, expansion=true, spacing=true]{microtype}
\usepackage[caption=false,font=normalsize,labelfont=sf,textfont=sf]{subfig}
\usepackage[colorlinks, linkcolor=blue, anchorcolor=black, citecolor=blue]{hyperref}

\newtheorem{lemma}{Lemma}

\newtheorem{theorem}{Theorem}

\setlist[itemize]{left=1pt}
\setlist[enumerate]{left=1pt}

\begin{document}

\title{
	Precoding Sequence Design for MIMO Sensing with Scatterers Based on Prior Information
}

\author{
	Yiming~Liu,~\IEEEmembership{Graduate~Student~Member,~IEEE,}
	and
	Wei~Yu,~\IEEEmembership{Fellow,~IEEE}
		
	\thanks{
		Manuscript submitted for possible publication on June 13, 2026. 
		This work was supported by NSERC via the Canada Research Chairs program and Discovery Grant.
		The materials in this paper has been presented in part at the IEEE International Symposium on Information Theory (ISIT), June 2026 \cite{LiuISIT}.
		\textit{(Corresponding author: Wei~Yu.)}
		
		The authors are with The Edward S. Rogers Sr. Department of Electrical and Computer Engineering, University of Toronto, Toronto, Ontario M5S 3G4, Canada (e-mails: eceym.liu@mail.utoronto.ca; weiyu@ece.utoronto.ca).
	}
		
}




\maketitle

\begin{abstract}

The presence of interfering scatterers fundamentally changes the design principle for MIMO sensing.
Unlike the target-only case,
where MIMO sensing sequence design reduces to optimizing the transmit sample covariance,
this paper shows that 
scatterer-induced signal-dependent interference makes the Bayesian Fisher information depend on the full temporal precoding sequence.
Consequently,
for the MIMO sensing problem with scatterers using the Bayesian Cramér-Rao lower bound (BCRLB) as the objective,
the entire sensing sequence must be designed explicitly,
instead of just the precoding matrix.
This paper considers such a precoding sequence design problem under hardware constraint
for MIMO sensing for estimating the azimuth angles of multiple targets
based on the prior information of both the targets and the scatterers.
We formulate a worst-case BCRLB minimization across multiple target angles, 
yielding a max-min fractional program under constant-modulus or constant-norm hardware constraints.
We further develop a constant-norm linear transform that converts the ratio objectives into linear forms,
leading to an iterative algorithm with closed-form precoder updates.
The framework extends to joint precoder-combiner design and multi-stage sensing with adaptive prior refinement.
Numerical results demonstrate the effectiveness and the efficiency of the proposed algorithm,
revealing sweeping-like beampatterns that illuminate target angular regions
while suppressing interference from the scatterers. 

\end{abstract}

\begin{IEEEkeywords}
	
Bayesian inference, Cramér-Rao lower bound, MIMO sensing, precoding, fractional programming, max-min optimization, constant-norm linear transform, multi-stage sensing.

\end{IEEEkeywords}

\section{Introduction}
\label{Section I}

\IEEEPARstart{M}{ulti}-input multiple-output (MIMO) technology,
owing to its capability to exploit spatial dimensions to enhance spectral efficiency and mitigate interference,
has long served as a cornerstone of wireless communication networks \cite{6736761}.
In recent years, emerging applications, e.g., smart cities, industrial automation, and autonomous driving, 
have created an increasing demand for sensing functionalities in beyond-fifth-generation (B5G) and sixth-generation (6G) wireless networks.
The trend of wireless communication systems towards higher frequency bands and larger antenna arrays
provides an exciting opportunity to implement sensing by utilizing existing MIMO infrastructure, i.e., MIMO sensing.
Moreover, MIMO sensing can be integrated with communication functionalities, i.e., integrated sensing and
communications (ISAC),
enabling future wireless networks to provide intelligent
sensing services and environment-aware applications \cite{9737357}.
In this paper, we investigate how to effectively leverage MIMO for sensing.

MIMO sensing has attracted significant attention in the recent literature,
but most existing works on the design of MIMO sensing strategies assume a target-only environment,
in which the received echoes are assumed to be associated solely with the targets of interest,
while scatterers are either neglected or treated as additional targets to be estimated.
The main objective of this paper is to show that 
by taking into account the potential availability of \emph{prior information} on the targets and the scatterers, 
it is possible to take advantage of the spatial degrees-of-freedom (DoFs) offered by MIMO arrays 
to spatially separate targets from scatterers.
Toward this end,
this paper adopts a Bayesian approach
in which the difference between the prior distributions of the target and scatterer parameters
is utilized to design spatial precoders to enhance target-reflected signals
while suppressing scatterer-induced interference.
This allows us to account for the scatterers without estimating them as additional targets,
which would have resulted in an excessive increase in the dimensionality of the parameter space.

Specifically, this paper employs the Bayesian Cramér-Rao lower bound (BCRLB) as the sensing performance metric
and design MIMO precoders to minimize the worst-case BCRLB across multiple targets,
while accounting for the statistics of the scatterers.
We show that the presence of interfering scatterers fundamentally changes the design principle for MIMO sensing.
In the target-only scenarios commonly considered in existing works,
the MIMO precoder design can be reduced to the design of the sample covariance of the transmit signal.
In contrast, 
in this paper we show that because scatterer-induced interference is \emph{signal-dependent},
one must explicitly optimize the MIMO precoding sequence over the entire sensing interval,
instead of just optimizing its sample covariance.

This paper develops an optimization paradigm for the MIMO sensing sequence design
problem
while accounting for practical hardware constraints.
At the transmitter side,
many existing studies assume a fully digital architecture,
which requires one dedicated radio frequency (RF) chain per antenna,
resulting in high hardware cost and power consumption \cite{1367565}.
An alternative is the single-RF-chain architecture, 
where the transmit precoder is implemented via phase shifters.
In this paper, we consider both the single-RF-chain architecture
and the architecture where both the amplitude and phase can be controlled.
We show that these hardware constraints can be effectively dealt with 
using a novel \emph{constant-norm linear transform technique},
which gives rise to a low-complexity iterative algorithm that admits closed-form precoder updates in each iteration.

At the receiver side, this paper considers both the fully digital and the 
single RF-chain architectures.
In the latter case, 
the received signals across multiple antennas need to be combined before further processing.
Consequently, the sensing performance depends jointly on both the precoders and the combiners.
This motivates the joint design of precoders and combiners treated in this paper.

The proposed sensing scheme can be extended into a multi-stage setting, where the posterior of the target angles is updated based on the echoes received within each sensing stage and is used as the prior for the next stage.
Accordingly, the precoders are designed stage by stage based on the refined prior, leading to improved estimation accuracy in the presence of interfering scatterers.
This paper analyzes such a sequential
sensing strategy by quantifying the impact of stage length on multi-stage sensing performance under a fixed symbol-period budget,
thus providing guidelines for designing multiple-stage sensing schemes.

\subsection{Related Works}

There are numerous prior works investigating MIMO sensing with or without integration into communications.
Most of them consider only the targets of interest while neglecting interfering scatterers,
which are often present in realistic environments.
In such scenarios,
the transmission design for MIMO sensing often reduces to optimizing a fixed transmit sample covariance.
In earlier MIMO sensing works,
the transmit sample covariance matrix is designed based on various sensing objectives,
such as matching desired beampatterns \cite{4276989, 5765721},
and minimizing the CRLB for estimation \cite{4359542, 10138058}.
In recent ISAC studies,
the core focus is still on sensing-oriented transmission design,
but under extra communication constraints.
For example, the work \cite{8386661} designs the beampatterns for an ISAC system in order to achieve a trade-off between two functionalities.
To facilitate the MIMO radar in utilizing its full DoF, 
the authors in \cite{9124713} jointly design the individual radar and communication transmit covariance matrices to optimize both the beampatterns and the signal-to-interference-and-noise ratio (SINR).
In \cite{9652071}, the design of the transmit covariance matrices is formulated to minimize the sensing CRLB under communication constraints.
The transmit covariance can also be optimized for CRLB-based sensing under SINR constraints via uplink-downlink duality \cite{2509.13661}.
The transmit covariance has also been studied from an information-theoretic perspective,
where the CRLB-rate region is used to characterize the fundamental sensing-communication trade-off under Gaussian ISAC channels \cite{10147248}.
In addition, due to the capability of reconfigurable intelligent surfaces (RISs) to control the propagation environment,
several works have investigated RIS-assisted MIMO sensing \cite{10440056, 9454375, 9732186, 9769997, 10364735}.

Beyond the target-only settings discussed above,
some prior works have also considered interference in the sensing model.
For example, the works \cite{6748061, 7953658} incorporate interference or clutter into the sensing model,
but their designs still remain at the level of a fixed transmit covariance.
References \cite{6649991, 8141978} consider sensing with interference and design a sequence of waveforms to maximize the sensing SINR,
with the formulation restricted to a single-target scenario.
Interference from external communication systems is considered in \cite{8892631}.

Most of the aforementioned works,
including both target-only and interference-aware formulations,
are conducted with the assumption of fully digital transceivers,
which may lead to high hardware costs and power consumption.
Some studies use low-resolution digital-to-analog converters (DACs) or analog-to-digital converters (ADCs) to solve this issue \cite{Liu2506:MIMO}.
For example, \cite{9266762} considers only the one-bit ADCs at receivers and uses the SNR as a metric for sensing performance.
In \cite{9399801}, the authors consider only the one-bit DACs at transmitters and adopt the CRLB as the sensing metric.
Another approach is to employ fewer RF chains at transceivers.
For example, \cite{10901673} employs a hybrid analog-digital transmitter to reduce the cost of ISAC systems.
This paper considers phase-only precoding/combining at transceivers to reduce hardware cost and power consumption.

It is also worth noting that most of the aforementioned works optimize sensing transmissions based on heuristic metrics,
such as the sensing SINR and beampattern matching.
Another widely used metric is the CRLB,
which is more directly connected to the sensing performance since it provides a lower bound on the mean-squared error (MSE) among all unbiased estimators.
However, it has an inherent limitation that it depends on the true values of the parameters to be estimated (such as the azimuth angles), which are unknown in practice.
A more practical framework is to assume a prior distribution for the parameters and to employ the BCRLB (e.g., \cite{6541985}) as a more suitable sensing metric, as done in \cite{10901183, 10584278}.
In particular, \cite{10901183} adopts the BCRLB as the metric to design the probing signals for a MIMO sensing system with fully digital transceivers.
The work \cite{10584278} uses the BCRLB as the sensing metric for an ISAC system.
However, most of these prior works do not account for scatterers. In addition, many of the aforementioned studies rely on computationally intensive optimization methods,
e.g., semidefinite programming,
which have limited practicality for large-scale MIMO especially in real-time applications.

\subsection{Main Contributions}

The main contributions of this paper are as follows:

\begin{itemize}
	
	\item
	This paper adopts a Bayesian approach for target sensing in the presence of scatterers.
	We show that spatial beamforming can be used to exploit the difference in the prior distributions of the target and the scatterer parameters to distinguish the targets from the scatterers.
	
	\item
	This paper models the scatterers as signal-dependent Gaussian interference
	and reveals that the presence of interfering scatterers makes the Bayesian Fisher information for target-sensing depend on the entire temporal precoding sequence rather than just the transmit sample covariance.
	This necessitates precoding sequence design for MIMO sensing in the presence of scatterers.
	
	\item
	This paper formulates a BCRLB-based precoding sequence design problem for estimating the azimuth angles of multiple targets in the presence of scatterers.
	We show that when the nuisance pathloss coefficients are modelled as independent zero-mean Gaussian random variables,
	it leads to tractable BCRLB expressions.
	In this case, the problem of minimizing the worst-case BCRLB across the targets can be cast as a max-min fractional program.
	
	\item
	This paper considers both the single-RF-chain transceiver architecture where the precoders are subject to constant-modulus constraint, 
	and the architecture where both amplitude and phase can be controlled and the precoder must satisfy a constant-norm constraint.
	Both of these types of constraints can be efficiently dealt with 
	by using a linear transform technique
	that converts the ratio objective into a linear form,
	which leads to an iterative algorithm with closed-form precoder update in each iteration.

	\item 
	This paper further extends the proposed framework to joint precoder-combiner design 
	and to multi-stage sensing with adaptive prior refinement.
	
	\item 
	Numerical results show nontrivial sequential sweeping-like beampatterns that allow target signals to be enhanced,
	while suppressing the interference due to scatterers.
	For the multi-stage sensing scenario, 
	the results also reveal a trade-off between the number of sensing stages and the number of sensing symbols per stage.
	
\end{itemize}

\subsection{Notations}
 
This paper uses lowercase letters, bold lowercase letters, and bold uppercase letters to denote scalars, vectors, and matrices, respectively.
The transpose, Hermitian transpose, inverse, and trace of matrices are represented by $\left( \cdot \right)^{\mathsf{T}}$, $\left( \cdot \right)^{\mathsf{H}}$, $\left( \cdot \right)^{-1}$, and $\mathsf{Tr}\left( \cdot \right)$, respectively. 
The operators $\mathfrak{Re} \left\lbrace \cdot \right\rbrace$ and $\mathfrak{Im} \left\lbrace \cdot \right\rbrace$ are the real and imaginary parts of a complex number.
We adopt $\mathbf{I}_N$ to denote an $N \times N$ identity matrix.
The expectation is denoted by $\mathbb{E} \left[ \cdot \right]$.
We adopt the operator $\mathsf{vec} \left( \cdot \right)$ to stack the columns of a matrix into a column vector and adopt $\mathsf{vec}^{-1} \left( \cdot \right)$ to denote its inverse operation.
We use $\angle (\cdot)$ to denote the phase of a complex number 
and use $\mathcal{CN} ( \mathbf{x} \mathop{|} \boldsymbol{\mu}, \mathbf{\Sigma} )$ to denote the complex Gaussian distribution of $\mathbf{x}$ with mean $\boldsymbol{\mu}$ and covariance $\mathbf{\Sigma}$.

\section{System Model} 
\label{Section 02}

This paper investigates a monostatic MIMO sensing system equipped with colocated $N_T$ transmit antennas and $N_R$ receive antennas.
Both the transmit and receive antennas are arranged as uniform linear arrays for simplicity, as illustrated in Fig.~\ref{Figure_01}.
The MIMO sensing scenario consists of $M$ targets of interest and $K$ scatterers.
The precoded signals from the transmit antennas are reflected by both the targets and the scatterers.
The sensing task is to estimate the azimuth angles of the targets of interest, based on the received echoes.

\begin{figure*}[!t]
	\centering
	\includegraphics[width=1\linewidth]{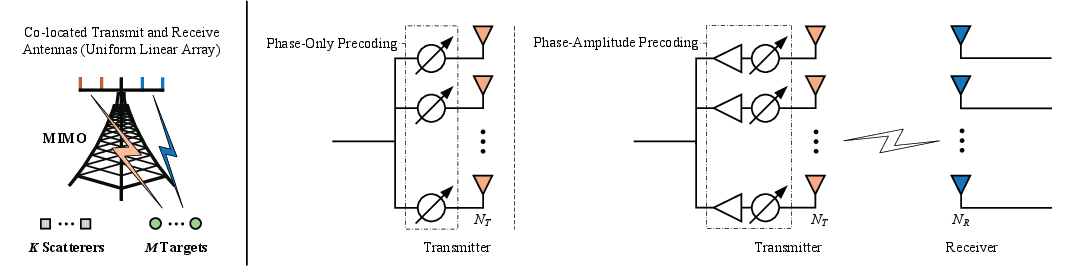}
	\caption{Monostatic MIMO sensing with phase-only or phase-amplitude transmitter and fully digital receiver.}
	\label{Figure_01}
\end{figure*}

We adopt a Bayesian approach and use the knowledge of the prior probability distributions of the azimuth angles of the targets and the scatterers to separate them spatially.
However, we do not estimate or track the azimuth angles of the scatterers in order to prevent excessive dimensionality increase in the parameter space.
In this scenario, the signals reflected from the scatterers become interference in the estimation process for the target angles.
Moreover, we also do not estimate or track the pathloss parameters of the targets and the scatterers and assume that they are nuisance random variables with zero-mean and known variances.
This zero-mean assumption significantly simplifies the mathematical derivation;
such assumption is often made in the radar and mmWave literature \cite{skolnik2001radar, 6847111}, (especially when the physical characteristics of the targets are completely unknown, at least in the initial sensing stage). 
The design task is to find a sequence of precoded signals across the MIMO arrays and across the sensing interval to optimize the performance of estimating the target angles.

To reduce hardware cost and power consumption,
we begin by assuming that the transmitter is equipped with a single RF chain,
while the receiver is fully digital\footnote{The extension to a single-RF-chain receiver is investigated in Section~\ref{Section_06}.}.
The sequence of transmit precoders across $N_T$ antennas and $N$ symbol periods can be expressed as
\begin{equation}
	\mathbf{x}
	\triangleq
	\mathsf{vec} \left( \left[ \mathbf{x}_1, \mathbf{x}_2, \ldots, \mathbf{x}_N \right] \right),
\end{equation}
where $\mathbf{x}_n \in \mathbb{C}^{N_T \times 1}$ is the precoder for the $n$-th symbol period.
For the single-RF-chain transmitter, precoding is implemented using phase shifters with 
independent phase control at each antenna.
This is known as phase-only precoding, i.e.,
each entry $x_i$ of $\mathbf{x}$ is constrained to have constant modulus:
\begin{equation} \label{C1}
	\text{(C1):}
	\quad
	\left| x_i \right| = 1,
	\quad
	i = 1, 2, \ldots, N_T N.
\end{equation}

Alternatively, precoding can also be implemented by coupling each phase shifter with an amplifier \cite{8750903}, \cite{10124207}, enabling independent control of both phase and amplitude at each antenna.
We refer to this type of precoding as phase-amplitude precoding.
In this paper, we assume that each precoded symbol has a constant transmit power. In this case, $\mathbf{x}_n$ is subject to a constant-norm constraint:
\begin{equation} \label{C2}
	\text{(C2):}
	\quad
	\left\| \mathbf{x}_n \right\|_2 = \sqrt{N_T}, 
	\quad
	n = 1, 2, \ldots, N.
\end{equation}

For the transmit and receive antenna arrays, we index antennas with respect to the centre of the array, defined as
\begin{align}
	t_{n}
	&\triangleq
	n - \frac{N_T+1}{2},
	\quad 
	n = 1, 2, \ldots, N_T,
	\\
	r_{n}
	&\triangleq
	n - \frac{N_R+1}{2},
	\quad
	n = 1, 2, \ldots, N_R.
\end{align}
Then,
the steering vectors for the forward path from the antennas to the target/scatterer at the azimuth angle $\theta$ and for the backward path from the target/scatterer to the antennas are 
\begin{align}
	\mathbf{h}_T(\theta)
	& \triangleq
	\left[
		e^{ j t_{1}   \tau \cos(\theta)},
		e^{ j t_{2}   \tau \cos(\theta)},
		\ldots,
		e^{ j t_{N_T} \tau \cos(\theta)}
	\right]^{\mathsf{T}},
	\\
	\mathbf{h}_R(\theta)
	& \triangleq
	\left[
		e^{ j r_{1}   \tau \cos(\theta)},
		e^{ j r_{2}   \tau \cos(\theta)},
		\ldots,
		e^{ j r_{N_R} \tau \cos(\theta)}
	\right]^{\mathsf{T}},
\end{align}
where $\tau = {2 \pi d} / {\lambda}$, $\lambda$ is the carrier wavelength,
and $d$ denotes the spacing between adjacent antennas, typically at $\lambda / 2$.
Then, the round-trip array response is given by
\begin{equation}
	\mathbf{H}(\theta)
	=
	\mathbf{h}_R(\theta)
	\mathbf{h}_T^{\mathsf{T}}(\theta).
\end{equation}

Considering the significant pathloss in high-frequency bands, we only account for first-order reflections in the received signal model.
Furthermore, the transmitted symbol is set to $1$.
This is without loss of generality since any known unit-power symbol can be absorbed into the precoders.
Then, the echo signals received across $N_R$ antennas and $N$ symbols can be stacked into a single vector $\mathbf{y} \in \mathbb{C}^{N_RN \times 1}$ as
\begin{align} \label{echo_signal}
	\mathbf{y} 
	& = 
	\mathsf{vec} \left( \left[ \mathbf{y}_1, \mathbf{y}_2, \ldots, \mathbf{y}_N \right] \right)
	= 
	\sum_{m=1}^{M} 
	\left( \alpha_m \mathbf{I}_N \otimes \mathbf{H}(\theta_m) \right)
	\left( \sqrt{p}\mathbf{x} \right)  \notag 
	\\ 
	& \hspace{28pt}
	+ \sum_{k=1}^{K}
	\left( \beta_k \mathbf{I}_N \otimes \mathbf{H}(\eta_k) \right)
	\left( \sqrt{p}\mathbf{x} \right) + \mathbf{n}  \notag  
	\\
	& \triangleq 
	\sum_{m=1}^{M} \alpha_m \sqrt{p} \mathbf{U}_m \mathbf{x}
	+ 
	\sum_{k=1}^{K} \beta_k \sqrt{p} \mathbf{\Upsilon}_k \mathbf{x}
	+ 
	\mathbf{n},
\end{align}
where we define $\mathbf{U}_m$ and $\mathbf{\Upsilon}_k$ for notational convenience.

Here, $\theta_m$ and $\eta_k$ are the azimuth angles of the $m$-th target and the $k$-th interfering scatterer, respectively,
and are assumed to be constant within $N$ symbol periods;
their prior distributions are assumed to be mutually independent and are known a priori.
In this paper, we exploit the difference in their priors to spatially separate the targets from the interfering scatterers.

The parameters $\alpha_m$ and $\beta_k$ denote the pathloss coefficients of the $m$-th target and the $k$-th interfering scatterer, respectively, which are assumed to be fixed over
the $N$ symbol periods. The parameters $\alpha_m$ and $\beta_k$ have
prior distributions of $\mathcal{CN}(0,\varsigma_m^2)$ and $\mathcal{CN}(0,\sigma_k^2)$, respectively, 
where $\varsigma_m^2$ and $\sigma_k^2$ are assumed to be known.
Finally, the transmit power per symbol period is $pN_T$, and the noise $\mathbf{n} \sim \mathcal{CN}(\mathbf{0}, \varepsilon^2 \mathbf{I}_{N_R N})$.

\section{Sensing Performance Metric}
\label{Section_03}

In this section, we introduce the sensing metric employed in the design methodology and show the impact of the interfering scatterers.
We utilize the BCRLB as the metric,
which enables exploitation of the prior information of the parameters of the targets of interest.
The BCRLB provides a lower bound on the MSE of estimators.
Specifically, this paper adopts an approach of formulating the BCRLB of the parameters of $M$ targets,
while treating the scatterers as interference.
The target parameters are represented by a $3M$-dimensional real-valued vector $\mathbf{v} \triangleq [\boldsymbol{\theta}^{\mathsf{T}}, \boldsymbol{\alpha}^{\mathsf{T}}]^{\mathsf{T}}$, where
\begin{align}
	\boldsymbol{\theta}
	& \triangleq
	\left[
		\theta_1, \theta_2, \ldots, \theta_M
	\right]^{\mathsf{T}},  
	\\
	\boldsymbol{\alpha}
	& \triangleq
	\left[
		\mathfrak{Re}\{\alpha_1\}, \ldots, \mathfrak{Re}\{\alpha_M\},
		\mathfrak{Im}\{\alpha_1\}, \ldots, \mathfrak{Im}\{\alpha_M\}
	\right]^{\mathsf{T}}.
\end{align}
The BCRLB of $\mathbf{v}$ has the following form: \cite{10584278}
\begin{equation} \label{eq:BCRLB}
	\mathsf{BCRLB} \left( \mathbf{v} \right) 
	= 
	\mathsf{Tr}
	\left\lbrace
	\left(  
	\left[
	\begin{array}{cc}
		\mathbf{\Phi}_{\mathsf{o}} & 
		\mathbf{F}_{\mathsf{o}} \\[1pt]
		\mathbf{F}_{\mathsf{o}}^{\mathsf{T}} &
		\mathbf{A}_{\mathsf{o}}
	\end{array} \right]
	+
	\left[
	\begin{array}{cc}
		\mathbf{\Phi}_{\mathsf{p}} & 
		\mathbf{F}_{\mathsf{p}} \\[1pt]
		\mathbf{F}_{\mathsf{p}}^{\mathsf{T}} &
		\mathbf{A}_{\mathsf{p}}
	\end{array} \right]
	\right)^{-1} 
	\right\rbrace,
\end{equation}
where $\mathbf{\Phi}_{\mathsf{o}}$ and $\mathbf{\Phi}_{\mathsf{p}}$ are the Bayesian Fisher information matrices (BFIMs) for $\boldsymbol{\theta}$ based on the received signals $\mathbf{y}$ and from the prior distributions, respectively;
$\mathbf{A}_{\mathsf{o}}$ and $\mathbf{A}_{\mathsf{p}}$ are the BFIMs for $\boldsymbol{\alpha}$;
$\mathbf{F}_{\mathsf{o}}$ and $\mathbf{F}_{\mathsf{p}}$ are the cross-BFIMs between $\boldsymbol{\theta}$ and $\boldsymbol{\alpha}$.

The terms $\mathbf{\Phi}_{\mathsf{p}}$, $\mathbf{A}_{\mathsf{p}}$, and $\mathbf{F}_{\mathsf{p}}$ depend solely on the prior distributions and are independent of the precoder $\mathbf{x}$.
Moreover, since the prior distributions of the target parameters are mutually independent,
$\mathbf{\Phi}_{\mathsf{p}}$ and $\mathbf{A}_{\mathsf{p}}$ are diagonal, and $\mathbf{F}_{\mathsf{p}} = \mathbf{0}$.

We use $\mathbf{\Phi}_{\mathsf{p}}$ as an example to illustrate the computation involved.
Let $q_m (\theta_m)$ denote the prior distribution of $\theta_m$.
The $m$-th diagonal entry of $\mathbf{\Phi}_{\mathsf{p}}$ is given by
\begin{equation} \label{prior_FI}
	\mathbf{\Phi}_{\mathsf{p}}^{(m,m)}
	\triangleq
	\mathop{\mathbb{E}_{\theta_m}}
	\left[
		\left(
			\frac{\mathop{\partial}}{\partial \theta_m}
			\ln q_m (\theta_m)
		\right)^2
	\right].
\end{equation}

The terms $\mathbf{\Phi}_{\mathsf{o}}$, $\mathbf{A}_{\mathsf{o}}$, and $\mathbf{F}_{\mathsf{o}}$ are induced by the received signals $\mathbf{y}$. They
depend on the precoding sequence $\mathbf{x}$ in different ways depending on whether the scatterers are taken into consideration, as explained in the subsequent sections.

The parameters of interest here are the target azimuth angles $\boldsymbol{\theta}$, 
while the pathloss coefficients $\boldsymbol{\alpha}$ are nuisance parameters.
In the proposed methodology, we first include $\boldsymbol{\alpha}$ in the BCRLB,
and then extract a lower bound for estimating $\boldsymbol{\theta}$ from the joint BCRLB expression via Schur complement at a later stage. 
This corresponds to a joint estimator for $(\boldsymbol{\alpha}, \boldsymbol{\theta})$ but with only $\boldsymbol{\theta}$ as the parameter of interest.

Moreover, the scatterers are not tracked and are treated as interference;
their effect is spatially filtered out based on the difference in the priors of the target angles $\{ \theta_m \}_{m=1}^{M}$ and the scatterer angles $\{ \eta_k \}_{k=1}^{K}$.
For tractability, the interference due to scatterers is assumed to be 
Gaussian distributed in this paper.

We now state the BCRLB optimization problems for the scenario with no interfering scatterers versus the scenario with signal-dependent interferences from the scatterers.

\subsection{Without Interfering Scatterers} 
\label{target-only}

In the target-only sensing scenarios, the echo signal model in \eqref{echo_signal} is reduced to
\begin{equation}
	\mathbf{y}
	=
	\sum_{m=1}^{M}
	\alpha_m \sqrt{p} \mathbf{U}_m \mathbf{x}
	+
	\mathbf{n},
\end{equation}
where $\mathbf{n}$ has i.i.d. entries across the $N$ symbols.
This leads to the following expression for the likelihood of $\mathbf{v}$ as a function of the received signals:
\begin{align}
	\ln
	\mathcal{L} 
	\left( 
	\mathbf{y}
	\mathop{|} 
	\mathbf{v}
	\right)
	=&
	\sum_{n=1}^{N}
	\ln
	\mathcal{L} 
	\left( 
	\mathbf{y}_n
	\mathop{|} 
	\mathbf{v}
	\right) 
	=
	- N_RN \ln \left( \pi \varepsilon^2 \right) \notag  
	\\
	- &
	\sum_{n=1}^{N}
	\frac{1}{\varepsilon^2} \left\| \mathbf{y}_n - \sum_{m=1}^{M}
	\alpha_m \sqrt{p} \mathbf{H}(\theta_m) \mathbf{x}_n \right\|^2,
\end{align}
where $\mathcal{L} \left( \cdot \mathop{|} \cdot \right)$ denotes the likelihood function.
This means that the overall Fisher information from the $N$ observations can be expressed as a sum of contributions from each symbol period.
We use $\mathbf{\Phi}_{\mathsf{o}}$ as an example to illustrate the computation~involved.
The $(m, m^{\prime})$-th entry of $\mathbf{\Phi}_{\mathsf{o}}$ is given by 
\begin{align} \label{BFIM_ind}
	&
	\mathbf{\Phi}_{\mathsf{o}}^{(m,m^{\prime})} 
	=
	- \mathop{\mathbb{E}_{\mathbf{y}, \mathbf{v}}}
	\left[
		\frac{\mathop{\partial}^2}{\partial\theta_m \partial\theta_{m^{\prime}}} \ln \mathcal{L} \left( \mathbf{y}\mathop{|}\mathbf{v} \right)
	\right]  \notag
	\\
	&=
	\frac{2 p N}{\varepsilon^2} \mathop{\cdot}  \notag  
	\\
	& \quad
	\mathfrak{Re}
	\Bigg\lbrace \mathsf{Tr} \Bigg( \mathbb{E}_{\mathbf{v}} \Big[ ( \alpha_{m} \mathbf{\dot{H}}(\theta_m) )^{\mathsf{H}} ( \alpha_{m^{\prime}} \mathbf{\dot{H}}(\theta_{m^{\prime}}) ) \Big]
	\frac{1}{N}
	\sum_{n=1}^{N}
	\mathbf{x}_n \mathbf{x}_n^{\mathsf{H}} \Bigg) \Bigg\rbrace  \notag
	\\
	& =
	\frac{2 p N}{\varepsilon^2}
	\mathfrak{Re}
	\Big\lbrace
	\mathsf{Tr}
	\Big( 
	\mathbb{E}_{\mathbf{v}}
	\Big[  
	( \alpha_{m} \mathbf{\dot{H}}(\theta_m) )^{\mathsf{H}}
	( \alpha_{m^{\prime}} \mathbf{\dot{H}}(\theta_{m^{\prime}}) )
	\Big]
	\mathbf{R}
	\Big) \Big\rbrace,
\end{align}
where $\mathbf{\dot{H}}(\theta_m) \triangleq {\partial \mathbf{H}(\theta_m)} / {\partial \theta_m}$,
and $\mathbf{R}$ is the transmit sample covariance matrix defined as
\begin{equation} 
	\mathbf{R} 
	\triangleq 
	\frac{1}{N} 
	\sum_{n=1}^{N} \mathbf{x}_n \mathbf{x}_n^{\mathsf{H}}.
\end{equation}
It can be observed from \eqref{BFIM_ind} that the BFIMs are the functions of $\mathbf{R}$ only. 
Therefore,
designing the precoding sequence amounts to designing a fixed transmit sample covariance matrix $\mathbf{R}$, 
as done in numerous existing works, e.g., \cite{4276989, 5765721, 4359542, 10138058, 10440056, 9124713, 10147248, 9652071, 8386661, 2509.13661}.

\subsection{With Interfering Scatterers}

\begin{figure*}[!b]
	\hrule
	\begin{align} \label{likelihood}
		\ln \mathcal{L} 
		\left( 
		\mathbf{y}
		\mathop{|} 
		\mathbf{v}
		\right)
		=
		- N_RN \ln \left( \pi \right)
		- \ln \left| \mathbf{\Sigma}(\mathbf{x}) \right|
		-
		\left( \mathbf{y} - \sum_{m=1}^{M}
		\alpha_m \sqrt{p} \mathbf{U}_m \mathbf{x} \right)^{\mathsf{H}}
		\left( \mathbf{\Sigma}(\mathbf{x}) \right)^{-1}
		\left( \mathbf{y} - \sum_{m=1}^{M}
		\alpha_m \sqrt{p} \mathbf{U}_m \mathbf{x} \right)
	\end{align}
\end{figure*}

In the scenario where interfering scatterers are present, the observations are no longer conditionally independent over $N$ given the target parameters, because the interfering scatterers' parameters are the same across the $N$ symbols.
Furthermore, the exact statistics for the computation of BCRLB is complicated.
One way to make the likelihood and the BCRLB tractable is to model 
the aggregate interference-plus-noise as a Gaussian random vector with matching mean and covariance matrix.
Then, the likelihood of $\mathbf{v}$ given the observations $\mathbf{y}$ is expressed in \eqref{likelihood} at the bottom of the page,
where $\mathbf{\Sigma}(\mathbf{x})$ is the covariance matrix of the aggregate interference-plus-noise term,
\begin{equation} \label{Rx}
	\mathbf{\Sigma}(\mathbf{x})
	=
	\sum_{k=1}^{K} 
	p \sigma_k^2
	\mathop{\mathbb{E}_{\eta_k}}
	\left[ 
	\left( \mathbf{\Upsilon}_k \mathbf{x} \right) 
	\left( \mathbf{\Upsilon}_k \mathbf{x} \right)^{\mathsf{H}} 
	\right]
	+
	\varepsilon^2 \mathbf{I}_{N_R N}.
\end{equation}
Then, the BCRLB has the same form as in \eqref{eq:BCRLB}, where
the $(m, m^{\prime})$-th entry of $\mathbf{\Phi}_{\mathsf{o}}$ is given by
\begin{multline} \label{FI}
	\mathbf{\Phi}_{\mathsf{o}}^{(m,m^{\prime})}
	=  \\
	2 p \mathop{\mathfrak{Re}}
	\left\lbrace
	\mathbb{E}_{\mathbf{v}}
	\left[  
	( \alpha_{m} \mathbf{\dot{U}}_m \mathbf{x} )^{\mathsf{H}}
	\left( \mathbf{\Sigma}(\mathbf{x}) \right)^{-1} 
	( \alpha_{m^{\prime}} \mathbf{\dot{U}}_{m^{\prime}} \mathbf{x} )
	\right] \right\rbrace,
\end{multline}
where $\mathbf{\dot{U}}_m \triangleq {\partial \mathbf{U}_m} / {\partial \theta_m}$.
As for the cross-term of BFIM $\mathbf{F}_{\mathsf{o}}$, it consists of two sub-matrices, i.e.,
\begin{equation}
	\mathbf{F}_{\mathsf{o}}
	=
	\left[ 
		\mathfrak{Re} \{ \mathbf{\bar{F}}_{\mathsf{o}} \}, 
	  - \mathfrak{Im} \{ \mathbf{\bar{F}}_{\mathsf{o}} \}
	\right],
\end{equation}
where the $(m,m')$-th entry of $\mathbf{\bar{F}}_{\mathsf{o}}$ is given by
\begin{equation} \label{F_o}
	\mathbf{\bar{F}}_{\mathsf{o}}^{(m,m^{\prime})}
	=
	2 p \mathop{\mathbb{E}_{\mathbf{v}}}
	\left[  
	( \alpha_{m} \mathbf{\dot{U}}_m \mathbf{x} )^{\mathsf{H}}
	\left( \mathbf{\Sigma}(\mathbf{x}) \right)^{-1} 
	( \mathbf{U}_{m^{\prime}} \mathbf{x} )
	\right].
\end{equation}
The expression of $\mathbf{A}_{\mathsf{o}}$ is omitted here, as it does not affect the subsequent derivations.

Here, we remark that the prior information enters the BCRLB in two ways.
First, the observation-induced BFIMs are Bayesian quantities and thus involve expectations over the prior distributions of the unknown parameters, as seen in \eqref{FI} and \eqref{F_o} for the desired signal and in \eqref{Rx} for the interfering scatterers.
Second, the prior information also contributes through the prior BFIMs $\mathbf{\Phi}_{\mathsf{p}}$, $\mathbf{A}_{\mathsf{p}}$, and $\mathbf{F}_{\mathsf{p}}$.

The key observation is that in contrast to the target-only sensing scenario,
the BFIM entries are no longer simple functions of the transmit sample covariance matrix $\mathbf{R}$. 
The BCRLB in fact depends explicitly on the sequence of precoders $\{\mathbf{x}_n\}_{n=1}^{N}$ rather than their second-order sample covariance. 
Thus, unlike the target-only case where it is sufficient to design the transmit sample covariance, the precoder design in the presence of interfering scatterers must be over the sequence of $\{\mathbf{x}_n\}_{n=1}^{N}$.

\section{Precoding for MIMO Sensing with Scatterers}
\label{Section_04}

We now propose a methodology for designing the precoding sequence $\{\mathbf{x}_n\}_{n=1}^{N}$ for MIMO sensing with scatterers.
In this section, we formulate the BCRLB optimization problem for the multiple-target angle estimation. 
The algorithm for solving the problem is presented in the next section.

The first step is to derive the BCRLB for estimating the target azimuth angles $\boldsymbol{\theta}$, which is given by the trace of the top-left principal block of the inverse of full BFIM.
Using the Schur complement, the BCRLB of estimating $\boldsymbol{\theta}$ is given by
\begin{align} \label{schur}
	&
	\mathsf{BCRLB} \left( \boldsymbol{\theta} \right)
	=
	\mathsf{Tr} \bigg\{ \Big( 
	\left( \mathbf{\Phi}_{\mathsf{o}} + \mathbf{\Phi}_{\mathsf{p}} \right)  \notag  \\
	&\hspace{27.5pt}
	-
	\left( \mathbf{F}_{\mathsf{o}} + \mathbf{F}_{\mathsf{p}} \right) 
	\left( \mathbf{A}_{\mathsf{o}} + \mathbf{A}_{\mathsf{p}} \right)^{-1}  
	\left( \mathbf{F}_{\mathsf{o}} + \mathbf{F}_{\mathsf{p}} \right)^{\mathsf{T}} \Big)^{-1} \bigg\}.
\end{align}
The following key assumption, as already mentioned earlier,
allows the above expression to be simplified.
Under the assumption that $\{\alpha_m\}_{m=1}^M$ are independent and distributed as $\mathcal{CN}( 0, \varsigma_m^2 )$,
$\mathbf{F}_{\mathsf{o}}$ is zero due to the expectation over $\boldsymbol{\alpha}$ in \eqref{F_o}.
Similarly, the off-diagonal entries in $\mathbf{\Phi}_{\mathsf{o}}$ are also zero.
Hence, the BCRLB of estimating $\boldsymbol{\theta}$ can be simplified as
\begin{align} \label{BCRLB_eta}
	\mathsf{BCRLB} \left( \boldsymbol{\theta} \right) 
	= 
	\mathsf{Tr}\left\lbrace \left( \mathbf{\Phi}_{\mathsf{o}} + \mathbf{\Phi}_{\mathsf{p}} \right)^{-1} \right\rbrace.
\end{align}
Since both $\mathbf{\Phi}_{\mathsf{o}}$ and $\mathbf{\Phi}_{\mathsf{p}}$ are diagonal, the BCRLB for estimating each single target azimuth angle $\theta_m$ can now be expressed as
\begin{align} \label{BCRLB_m}
	\mathsf{BCRLB} \left( \theta_m \right)
	=
	\frac{1}{\mathbf{\Phi}_{\mathsf{o}}^{(m,m)} + \mathbf{\Phi}_{\mathsf{p}}^{(m,m)}} 
	\triangleq
	\frac{1}{\gamma_m (\mathbf{x})}.
\end{align}

For the multiple-target azimuth angle estimation problem,~we need to determine the relative importance of estimating different targets. 
To ensure fairness, we minimize the maximum BCRLB across all target angles.
Consequently, the problem of designing precoders to optimize the worst-case BCRLB across all angles can be formulated as
\begin{subequations}
	\begin{align}
		\textbf{(P1):} \quad
		\max_{\mathbf{x}} \ \ \min_{m} \;\;
		&\hspace{1pt}
		\gamma_m (\mathbf{x})  \\
		\mathop{\mathrm{subject \ to}} \;\;
		&
		\text{(C1):} \;
		\left| x_i \right| = 1, \ \forall i,  \\
		\mathop{\mathrm{or}} \;\;
		&
		\text{(C2):} \;
		\left\| \mathbf{x}_n \right\|_2 = \sqrt{N_T}, \ \forall n,
	\end{align}
\end{subequations}
which is a max-min fractional programming problem with extra constant-modulus or constant-norm constraints. 

The above precoder design problem depends on the prior distributions of the targets and the scatterers in two ways. 
As mentioned earlier, the priors enter the objective function $\gamma_m(\mathbf{x})$ through both $\mathbf{\Phi}_{\mathsf{p}}^{(m,m)}$ and
$\mathbf{\Phi}_{\mathsf{o}}^{(m,m)}$, where the latter involves expectations
over the priors of the targets and the scatterers, e.g., as in \eqref{Rx} and \eqref{FI}.
In many cases, $\mathbf{\Phi}_{\mathsf{p}}^{(m,m)}$ does not affect the optimization.
For the single-target case, problem (P1) reduces to the maximization of $\gamma_1(\mathbf{x})$. 
Since $\mathbf{\Phi}_{\mathsf{p}}^{(1,1)}$ is independent of $\mathbf{x}$, it can be
omitted when solving for the optimal precoder.
For the multi-target case, if the priors of all the target angles have the same shape and differ only by shifts, then their prior Fisher information is identical, i.e., 
$\mathbf{\Phi}_{\mathsf{p}}^{(m,m)}$ is the same for all $m$. It becomes a
common additive term that can be omitted from the max-min optimization. 

To optimize over $\mathbf{x}$, we need to extract $\mathbf{x}$ from the expectation, which can be done by explicitly computing the expectations via their second-moment matrices. 
The steps below are similar to those of \cite[Theorem~1]{11114787}.  Define 
\begin{align}
	\mathbf{\bar{U}}_m
	\triangleq
	\mathbb{E}_{\theta_m}
	\left[ 
		\mathsf{vec} \left( \mathbf{\dot{U}}_m \right) 
		\mathsf{vec}^{\mathsf{H}} \left( \mathbf{\dot{U}}_m \right) 
	\right],
\end{align}
and let its rank be $R_m$.
Let $\varrho_m^{[r]} $ and $\mathbf{\bar{u}}_m^{[r]}$ be the $r$-th eigenvalue and the corresponding eigenvector of $\mathbf{\bar{U}}_m$.
Then, $\mathbf{\Phi}_{\mathsf{o}}^{(m,m)}$ can be rewritten as
\begin{equation} \label{eFI}
	\mathbf{\Phi}_{\mathsf{o}}^{(m,m)}
	=
	2 p \varsigma_m^2
	\sum_{r=1}^{R_m}
	\varrho_m^{[r]} 
	\left( \mathbf{\dot{U}}_m^{[r]} \mathbf{x} \right)^{\mathsf{H}}
	\left( \mathbf{\Sigma} (\mathbf{x}) \right)^{-1}
	\left( \mathbf{\dot{U}}_m^{[r]} \mathbf{x} \right),
\end{equation}
where with a slight abuse of notation, $\mathbf{\dot{U}}_m^{[r]}$ represents the~matrix reshaped from $\mathbf{\bar{u}}_m^{[r]}$,
i.e., $\mathbf{\dot{U}}_m^{[r]} \triangleq \mathsf{vec}^{-1} ( \mathbf{\bar{u}}_m^{[r]} )$.
Similarly, to rewrite the covariance matrix $\mathbf{\Sigma}(\mathbf{x})$, we define 
\begin{align}
	\mathbf{\bar{\Upsilon}}_k
	\triangleq
	\mathbb{E}_{\eta_k}
	\Big[ 
		\mathsf{vec} \left( \mathbf{\Upsilon}_k \right)
		\mathsf{vec}^{\mathsf{H}} \left( \mathbf{\Upsilon}_k \right)
	\Big],
\end{align}
and let its rank be $T_k$. 
Let $\kappa_k^{[t]}$ and $\boldsymbol{\bar{\upsilon}}_k^{[t]}$ be the $t$-th eigenvalue and the corresponding eigenvector of $\mathbf{\bar{\Upsilon}}_k$.
Then,  $\mathbf{\Sigma} (\mathbf{x})$ can be rewritten as
\begin{equation} \label{decomp}
	\mathbf{\Sigma}(\mathbf{x})
	=
	\sum_{k=1}^{K}
	\sum_{t=1}^{T_k}
	p \sigma_k^2 \kappa_k^{[t]}
	\left( \mathbf{\Upsilon}_k^{[t]} \mathbf{x} \right) 
	\left( \mathbf{\Upsilon}_k^{[t]} \mathbf{x} \right)^{\mathsf{H}} 
	+
	\varepsilon^2 \mathbf{I}_{N_R N},
\end{equation}
where $\mathbf{\Upsilon}_k^{[t]} \triangleq \mathsf{vec}^{-1} (\boldsymbol{\bar{\upsilon}}_k^{[t]})$.
In this way, the precoding sequence design problem (P1) now only involves deterministic quantities.


\section{Algorithm for Optimizing the Precoders}
\label{Section 05}

\subsection{Constant-Norm Linear Transform}

To tackle the max-min fractional programming problem (P1) in an efficient manner,
we adopt a linear transform technique to deal with the multiple fractional sensing metrics $\{\gamma_m (\mathbf{x})\}_{m=1}^{M}$.
This technique is proposed in our prior work \cite{11114787} 
for maximizing a single sum-of-ratios objective in an RIS-assisted ISAC system,
where the optimization variables are RIS reflection coefficients subject to constant-modulus constraints.
In this paper, to design both the phase-only and phase-amplitude precoders,
we show that this linear transform can be generalized to apply to any constraint set that has a constant-norm property. 
We term this technique the \emph{constant-norm linear transform}.

To present the constant-norm linear transform, we begin by defining the following form of fractional functions:
\begin{equation} \label{def}
	f(\mathbf{x})
	\triangleq 
	\left( \mathbf{A} \mathbf{x} \right)^{\mathsf{H}} 
	\left( \mathbf{D}(\mathbf{x})\right)^{-1}
	\left( \mathbf{A} \mathbf{x} \right), 
\end{equation}
where the denominator matrix $\mathbf{D} (\mathbf{x})$ is defined as
\begin{equation}
	\mathbf{D} (\mathbf{x})
	\triangleq  
	\sum_{k} 
		\rho_k 
		(\mathbf{B}_{k}\mathbf{x})
		(\mathbf{B}_{k}\mathbf{x})^{\mathsf{H}} 
	+ 
	\mathbf{C},
\end{equation}
the matrix $\mathbf{C}$ is positive definite, $\rho_k >0$ for all $k$, 
and $\mathbf{x}$ is subject to either (C1) or (C2) as in \eqref{C1} and \eqref{C2}.

The key observation is under either the constraint (C1) or (C2), we have the following \emph{constant-norm} property:
\begin{equation}
	\|\mathbf{x}\|_2 = \sqrt{N_T N}.
	\label{eq:constant-norm-constraint}
\end{equation}

The following lemma extends \cite[Lemma~2]{11114787} to fractional functions defined over such constant-norm constraints. 
\begin{lemma}[Constant-Norm Linear Transform]
	\label{Lemma_1}
	Consider a fractional function $f(\mathbf{x})$ over $\mathbf{x} \in \mathcal{X}$, where the constraint set $\mathcal{X}$ satisfies the constant-norm property (\ref{eq:constant-norm-constraint}). Then, 
	\begin{align} \label{LT_F}
		f(\mathbf{x}) 
		\geq 
		\bar{f}(\mathbf{x}, \mathbf{z}, \mathbf{g})
		\triangleq
		2 \mathop{\mathfrak{Re}}\left\lbrace \mathbf{x}^{\mathsf{H}} \mathbf{u} (\mathbf{z}, \mathbf{g}) \right\rbrace
		+ 
		c (\mathbf{z}, \mathbf{g}),
	\end{align}
	for all $\mathbf{g}$ and for all $\mathbf{x}, \mathbf{z} \in \mathcal{X}$,
	where the coefficient of the linear term in $\mathbf{x}$ is
	\begin{equation}
		\mathbf{u} (\mathbf{z}, \mathbf{g})
		=
		\left( \delta \mathbf{I} - \mathbf{M} \right) \mathbf{z} 
		+ 
		\mathbf{A}^{\mathsf{H}} \mathbf{g},
	\end{equation}
	the matrix $\mathbf{M}$ is given by
	\begin{equation}
		\mathbf{M}
		= 
		\sum_{k}
			\rho_k 
			\big(\mathbf{B}_{k}^{\mathsf{H}} \mathbf{g}\big)
			\big(\mathbf{B}_{k}^{\mathsf{H}} \mathbf{g}\big)^{\mathsf{H}},
	\end{equation}
	the parameter $\delta$ is the trace of $\mathbf{M}$, and $c (\mathbf{z}, \mathbf{g})$ is given by
	\begin{equation}
		c (\mathbf{z}, \mathbf{g})
		=
		\mathbf{z}^{\mathsf{H}} \mathbf{M} \mathbf{z} 
		- 
		2 \delta N_T N
		- 
		\mathbf{g}^{\mathsf{H}} \mathbf{C} \mathbf{g} .
	\end{equation}
	The equality in \eqref{LT_F} is achieved at
	\begin{align} \label{opt_auxi}
		\mathbf{z}^{\star} 
		= 
		\mathbf{x},
		\ \text{and} \
		\mathbf{g}^{\star}  
		= 
		(\mathbf{D} (\mathbf{x}))^{-1} \mathbf{A} \mathbf{x}.
	\end{align}
\end{lemma}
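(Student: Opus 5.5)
The plan is to chain two lower bounds: a quadratic transform that linearizes the ratio in $\mathbf{x}$ up to a quadratic penalty, and a majorization of that quadratic penalty that becomes linear in $\mathbf{x}$ precisely because $\|\mathbf{x}\|_2$ is fixed by (\ref{eq:constant-norm-constraint}).

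\emph{Step 1: quadratic transform.} Since $\mathbf{D}(\mathbf{x})\succ\mathbf{0}$ (because $\mathbf{C}\succ\mathbf{0}$ and $\rho_k>0$), complete the square in an auxiliary vector $\mathbf{g}$:
\begin{equation*}
(\mathbf{A}\mathbf{x})^{\mathsf{H}}\mathbf{D}^{-1}(\mathbf{A}\mathbf{x}) - \big(\mathbf{g}-\mathbf{D}^{-1}\mathbf{A}\mathbf{x}\big)^{\mathsf{H}}\mathbf{D}\big(\mathbf{g}-\mathbf{D}^{-1}\mathbf{A}\mathbf{x}\big) = 2\mathfrak{Re}\{\mathbf{g}^{\mathsf{H}}\mathbf{A}\mathbf{x}\} - \mathbf{g}^{\mathsf{H}}\mathbf{D}(\mathbf{x})\mathbf{g}.
\end{equation*}
This gives $f(\mathbf{x}) \ge 2\mathfrak{Re}\{\mathbf{x}^{\mathsf{H}}\mathbf{A}^{\mathsf{H}}\mathbf{g}\} - \mathbf{g}^{\mathsf{H}}\mathbf{D}(\mathbf{x})\mathbf{g}$ for all $\mathbf{g}$, with equality at $\mathbf{g}^\star=(\mathbf{D}(\mathbf{x}))^{-1}\mathbf{A}\mathbf{x}$. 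Using $\mathbf{g}^{\mathsf{H}}(\mathbf{B}_k\mathbf{x})(\mathbf{B}_k\mathbf{x})^{\mathsf{H}}\mathbf{g} = |\mathbf{x}^{\mathsf{H}}\mathbf{B}_k^{\mathsf{H}}\mathbf{g}|^2$, I would rewrite the penalty as $\mathbf{g}^{\mathsf{H}}\mathbf{D}(\mathbf{x})\mathbf{g} = \mathbf{x}^{\mathsf{H}}\mathbf{M}\mathbf{x} + \mathbf{g}^{\mathsf{H}}\mathbf{C}\mathbf{g}$, where $\mathbf{M}$ is exactly the matrix defined in the lemma.

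\emph{Step 2: majorize the quadratic.} The term $-\mathbf{x}^{\mathsf{H}}\mathbf{M}\mathbf{x}$ is concave in $\mathbf{x}$, so I would upper bound $\mathbf{x}^{\mathsf{H}}\mathbf{M}\mathbf{x}$ by an expression that is linear in $\mathbf{x}$. Since $\mathbf{M}\succeq\mathbf{0}$, its largest eigenvalue is at most $\delta=\mathsf{Tr}(\mathbf{M})$, hence $\delta\mathbf{I}-\mathbf{M}\succeq\mathbf{0}$. Therefore $(\mathbf{x}-\mathbf{z})^{\mathsf{H}}(\delta\mathbf{I}-\mathbf{M})(\mathbf{x}-\mathbf{z})\ge 0$ for any $\mathbf{z}$. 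Expanding this gives
\begin{equation*}
\mathbf{x}^{\mathsf{H}}\mathbf{M}\mathbf{x} \le \delta\big(\|\mathbf{x}\|_2^2+\|\mathbf{z}\|_2^2\big) - 2\mathfrak{Re}\{\mathbf{x}^{\mathsf{H}}(\delta\mathbf{I}-\mathbf{M})\mathbf{z}\} - \mathbf{z}^{\mathsf{H}}\mathbf{M}\mathbf{z},
\end{equation*}
with equality when $\mathbf{z}=\mathbf{x}$. This is the step where the constant-norm property is essential. Because $\mathbf{x},\mathbf{z}\in\mathcal{X}$, (\ref{eq:constant-norm-constraint}) gives $\|\mathbf{x}\|_2^2=\|\mathbf{z}\|_2^2=N_TN$. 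The term $\delta\|\mathbf{x}\|_2^2$, which would otherwise remain quadratic in $\mathbf{x}$, thus collapses to the constant $2\delta N_TN$. This is the only place where the specific structure of (C1)/(C2) enters, and it explains why the lemma holds for any constant-norm set and not only the constant-modulus one of \cite[Lemma~2]{11114787}.

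\emph{Step 3: assemble and check equality.} Substituting the bound from Step 2 into Step 1 yields
\begin{equation*}
f(\mathbf{x}) \ge 2\mathfrak{Re}\{\mathbf{x}^{\mathsf{H}}[(\delta\mathbf{I}-\mathbf{M})\mathbf{z}+\mathbf{A}^{\mathsf{H}}\mathbf{g}]\} + \mathbf{z}^{\mathsf{H}}\mathbf{M}\mathbf{z} - 2\delta N_TN - \mathbf{g}^{\mathsf{H}}\mathbf{C}\mathbf{g},
\end{equation*}
which is exactly $\bar f(\mathbf{x},\mathbf{z},\mathbf{g})$. Equality needs both inequalities to be tight at once. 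Since $\mathbf{M}$ depends only on $\mathbf{g}$, I would first fix $\mathbf{g}=\mathbf{g}^\star$ so that Step 1 is tight, and then choose $\mathbf{z}=\mathbf{x}$ so that Step 2 is tight for that $\mathbf{M}$; this is (\ref{opt_auxi}). The only points needing care are bookkeeping: the identity $\mathfrak{Re}\{\mathbf{g}^{\mathsf{H}}\mathbf{A}\mathbf{x}\}=\mathfrak{Re}\{\mathbf{x}^{\mathsf{H}}\mathbf{A}^{\mathsf{H}}\mathbf{g}\}$, and confirming that $\delta\ge\lambda_{\max}(\mathbf{M})$, which holds because $\mathbf{M}$ is a sum of positive semidefinite rank-one terms.
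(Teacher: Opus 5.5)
Your proposal is correct and follows the paper's proof essentially step for step. It uses the quadratic transform with $\mathbf{g}^\star=(\mathbf{D}(\mathbf{x}))^{-1}\mathbf{A}\mathbf{x}$, rewrites $\mathbf{g}^{\mathsf{H}}\mathbf{D}(\mathbf{x})\mathbf{g}=\mathbf{x}^{\mathsf{H}}\mathbf{M}\mathbf{x}+\mathbf{g}^{\mathsf{H}}\mathbf{C}\mathbf{g}$, and majorizes $\mathbf{x}^{\mathsf{H}}\mathbf{M}\mathbf{x}$ with $\mathbf{L}=\delta\mathbf{I}$, so that the constant-norm property turns $\delta\|\mathbf{x}\|_2^2+\delta\|\mathbf{z}\|_2^2$ into $2\delta N_TN$. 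The only difference is that you derive the majorization inequality yourself from $(\mathbf{x}-\mathbf{z})^{\mathsf{H}}(\delta\mathbf{I}-\mathbf{M})(\mathbf{x}-\mathbf{z})\ge 0$ rather than citing it, which makes the argument self-contained.
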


\begin{IEEEproof}
The proof follows the same arguments as in \cite[Lemma~2]{11114787} and is presented in Appendix~\ref{app: Lemma_01}.
\end{IEEEproof}

Lemma~\ref{Lemma_1} yields the following theorem, which provides 
a reformulation of a class of optimization problems involving ratios.

\begin{theorem} \label{theorem_1}
	Suppose that $h_{\mathsf{o},m}(\cdot)$ and $h_{\mathsf{c},w} (\cdot)$ are componentwise nondecreasing, and $\mathbf{f}_{\mathsf{o},m}(\mathbf{x})$ and $\mathbf{f}_{\mathsf{c},w}(\mathbf{x})$ are vector-valued functions defined as
	\begin{align}
		\mathbf{f}_{\mathsf{o},m}(\mathbf{x})
		& \triangleq	
		\left[ f_{\mathsf{o},m}^{[1]}(\mathbf{x}), 
   		\ldots, 
   		f_{\mathsf{o},m}^{[R_m]}(\mathbf{x}) \right],  
		\\
		\mathbf{f}_{\mathsf{c},w}(\mathbf{x})
		& \triangleq	
		\left[ f_{\mathsf{c},w}^{[1]}(\mathbf{x}), 
      	\ldots, 
     	f_{\mathsf{c},w}^{[T_w]}(\mathbf{x}) \right],
	\end{align}
	where $f_{\mathsf{o},m}^{[r]}(\mathbf{x})$ and $f_{\mathsf{c},w}^{[t]}(\mathbf{x})$ are ratios of the form \eqref{def}.
	Then, the following fractional program
	\begin{subequations} \label{max-min-ratios}
		\begin{align}
			\max_{\mathbf{x}} \hspace{7pt} \min_{m} \;\;
			&\;
			h_{\mathsf{o},m}\big( \mathbf{f}_{\mathsf{o},m}(\mathbf{x}) \big)  \\
			\mathop{\mathrm{subject \,\, to}} \;\;
			&\;
			h_{\mathsf{c},w}\big( \mathbf{f}_{\mathsf{c},w}(\mathbf{x}) \big) \geq \Gamma_w, \ \forall w,  \\
			&\;
			\mathbf{x} \in \mathcal{X}, 
		\end{align}
	\end{subequations}
	where the constraint set $\mathcal{X}$ satisfies the constant-norm property (\ref{eq:constant-norm-constraint}), is equivalent to
	\begin{subequations} \label{newmax-minLP}
		\begin{align}
			\max_{
				\substack{\mathbf{x}, \mathbf{z} \\ \{\mathbf{G}_{\mathsf{o},m}\}, \{\mathbf{G}_{\mathsf{c},w}\}}} 
			\min_{m} \;\;
			&\;
			h_{\mathsf{o},m} \big(\bar{\mathbf{f}}_{\mathsf{o},m}(\mathbf{x}, \mathbf{z}, \mathbf{G}_{\mathsf{o},m})\big)
			\\
			\mathop{\mathrm{subject\ to}} \quad \ \
			&\;
			h_{\mathsf{c},w} \big(\bar{\mathbf{f}}_{\mathsf{c},w} (\mathbf{x}, \mathbf{z}, \mathbf{G}_{\mathsf{c},w})\big) \geq \Gamma_w, \ \forall w,  \\
			&\;
			\mathbf{x}, \mathbf{z} \in \mathcal{X},
		\end{align}
	\end{subequations}
	where $\mathbf{G}_{\mathsf{o},m} = [\mathbf{g}_m^{[1]}, \ldots, \mathbf{g}_m^{[{R_m}]}]$ and $\mathbf{G}_{\mathsf{c},w} = [\mathbf{g}_m^{[1]}, \ldots, \mathbf{g}_m^{[{R_m}]}]$.
	The vectors $\bar{\mathbf{f}}_{\mathsf{o},m}(\mathbf{x}, \mathbf{z}, \mathbf{G}_{\mathsf{o},m})$ and $\bar{\mathbf{f}}_{\mathsf{c},w}(\mathbf{x}, \mathbf{z}, \mathbf{G}_{\mathsf{c},w})$ are given~by
	\begin{align}
		& \bar{\mathbf{f}}_{\mathsf{o},m}(\mathbf{x}, \mathbf{z}, \mathbf{G}_{\mathsf{o},m})
		\triangleq  \notag  \\
		& \qquad
		\left[ \bar{f}_{\mathsf{o},m}^{[1]}(\mathbf{x}, \mathbf{z}, \mathbf{g}_{\mathsf{o},m}^{[1]}), \ldots, \bar{f}_{\mathsf{o},m}^{[R_m]}(\mathbf{x}, \mathbf{z}, \mathbf{g}_{\mathsf{o},m}^{[R_m]}) \right],  \\
		& \bar{\mathbf{f}}_{\mathsf{c},w}(\mathbf{x}, \mathbf{z}, \mathbf{G}_{\mathsf{c},w})
		\triangleq  \notag  \\
		& \qquad
		\left[ \bar{f}_{\mathsf{c},w}^{[1]}(\mathbf{x}, \mathbf{z}, \mathbf{g}_{\mathsf{c},w}^{[1]}), \ldots, \bar{f}_{\mathsf{c},w}^{[T_w]}(\mathbf{x}, \mathbf{z}, \mathbf{g}_{\mathsf{c},w}^{[T_w]}) \right],
	\end{align}
	where $\bar{f}_{\mathsf{o},m}^{[r]}(\mathbf{x}, \mathbf{z}, \mathbf{g}_{\mathsf{o},m}^{[r]})$ and $\bar{f}_{\mathsf{c},w}^{[t]}(\mathbf{x}, \mathbf{z}, \mathbf{g}_{\mathsf{c},w}^{[t]})$
	are the transformed functions of
	$f_{\mathsf{o},m}^{[r]}(\mathbf{x})$ and $f_{\mathsf{c},w}^{[t]}(\mathbf{x})$, respectively,
	as defined in the linear transform \eqref{LT_F} in Lemma~\ref{Lemma_1}. 
\end{theorem}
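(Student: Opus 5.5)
The equivalence follows from Lemma~\ref{Lemma_1} by a standard partial-maximization argument. The key fact is that each ratio is the pointwise maximum of its transformed surrogate over the auxiliary variables:
\begin{equation*}
f(\mathbf{x}) = \max_{\mathbf{z}\in\mathcal{X},\,\mathbf{g}} \bar{f}(\mathbf{x},\mathbf{z},\mathbf{g}), \qquad \mathbf{x}\in\mathcal{X}.
\end{equation*}
Lemma~\ref{Lemma_1} gives the inequality $\bar f\le f$ for every $\mathbf{z}\in\mathcal{X}$ and every $\mathbf{g}$. It also gives equality at the explicit maximizer \eqref{opt_auxi}, namely $\mathbf{z}^\star=\mathbf{x}$ and $\mathbf{g}^\star=(\mathbf{D}(\mathbf{x}))^{-1}\mathbf{A}\mathbf{x}$.

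The important structural point is that $\mathbf{z}^\star=\mathbf{x}$ for \emph{every} ratio. This common maximizer in $\mathbf{z}$ is what lets a single shared $\mathbf{z}$ serve all ratios at once. By contrast, each ratio $f_{\mathsf{o},m}^{[r]}$ and $f_{\mathsf{c},w}^{[t]}$ has its own optimal $\mathbf{g}$. This is why each ratio receives its own column of $\mathbf{G}_{\mathsf{o},m}$ or $\mathbf{G}_{\mathsf{c},w}$.

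\textbf{Step 1: transformed values never exceed the originals.} Take any feasible point $(\mathbf{x},\mathbf{z},\{\mathbf{G}_{\mathsf{o},m}\},\{\mathbf{G}_{\mathsf{c},w}\})$ of \eqref{newmax-minLP}.
\begin{itemize}
\item Componentwise, $\bar{\mathbf{f}}_{\mathsf{o},m}\le\mathbf{f}_{\mathsf{o},m}(\mathbf{x})$ and $\bar{\mathbf{f}}_{\mathsf{c},w}\le\mathbf{f}_{\mathsf{c},w}(\mathbf{x})$.
\item Since each $h$ is componentwise nondecreasing, $h_{\mathsf{c},w}(\mathbf{f}_{\mathsf{c},w}(\mathbf{x}))\ge h_{\mathsf{c},w}(\bar{\mathbf{f}}_{\mathsf{c},w})\ge\Gamma_w$. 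Hence $\mathbf{x}$ is feasible for \eqref{max-min-ratios}.
\item Likewise $h_{\mathsf{o},m}(\mathbf{f}_{\mathsf{o},m}(\mathbf{x}))\ge h_{\mathsf{o},m}(\bar{\mathbf{f}}_{\mathsf{o},m})$ for each $m$. Taking $\min_m$ on both sides preserves the inequality.
\end{itemize}
Therefore the optimal value of \eqref{newmax-minLP} is at most that of \eqref{max-min-ratios}.

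\textbf{Step 2: the reverse inequality.} Take any feasible $\mathbf{x}$ of \eqref{max-min-ratios}. Set $\mathbf{z}=\mathbf{x}\in\mathcal{X}$, and set every auxiliary column to its optimal value, e.g.\ $\mathbf{g}_{\mathsf{o},m}^{[r]}=(\mathbf{D}_{\mathsf{o},m}^{[r]}(\mathbf{x}))^{-1}\mathbf{A}_{\mathsf{o},m}^{[r]}\mathbf{x}$, and similarly for the constraint ratios.
\begin{itemize}
\item By the equality condition of Lemma~\ref{Lemma_1}, all transformed values coincide with the original ratios.
\item Hence the constraints of \eqref{newmax-minLP} hold and the objectives are identical.
\end{itemize}
So \eqref{newmax-minLP} attains at least the value of \eqref{max-min-ratios}. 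Combining both steps, the optimal values are equal. Moreover, an optimal $\mathbf{x}$ of one problem yields an optimal solution of the other: in one direction via the construction above, in the other by simply dropping $(\mathbf{z},\mathbf{G})$.

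\textbf{Main obstacle.} The only delicate point is confirming that a single $\mathbf{z}$ works across all ratios simultaneously. The lower bound in Lemma~\ref{Lemma_1} must hold for every $\mathbf{z}\in\mathcal{X}$, not only at $\mathbf{z}=\mathbf{x}$, and it does because the constant-norm property \eqref{eq:constant-norm-constraint} is used inside the lemma. The equality point $\mathbf{z}=\mathbf{x}$ does not depend on the particular ratio. I would state this explicitly, along with the remark that the inner $\min_m$ and the monotone $h$ preserve the pointwise inequalities.
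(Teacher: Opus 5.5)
Your proposal is correct and is essentially the paper's argument. It rests on the same ingredients: Lemma~\ref{Lemma_1}'s bound $\bar f\le f$ for all $\mathbf{z}\in\mathcal{X}$ and all $\mathbf{g}$, equality at the common $\mathbf{z}^\star=\mathbf{x}$ with a per-ratio $\mathbf{g}^\star$, and componentwise monotonicity of the outer functions $h$. The paper packages this through an epigraph variable $\xi$ and an intermediate problem in which each surrogate is maximized over its own auxiliary variables, whereas you compare the two optimal values directly, which is slightly more streamlined and also makes the correspondence of optimal solutions explicit.
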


\begin{IEEEproof}
	The proof is given in Appendix~\ref{app: Theorem_01}.
\end{IEEEproof}

Theorem~\ref{theorem_1} enables the original fractional program \eqref{max-min-ratios} to be solved through its linear transformed problem \eqref{newmax-minLP}.
A natural approach for solving \eqref{newmax-minLP} is to alternately optimize the auxiliary variables $\{\mathbf{G}_{\mathsf{o},m}\}$, $\{\mathbf{G}_{\mathsf{c},w}\}$, and $\mathbf{z}$, and the original variable $\mathbf{x}$.
For fixed $\mathbf{x}$, the optimal auxiliary variables can be obtained in closed form as in \eqref{opt_auxi}.
For fixed auxiliary variables, the resulting subproblem over $\mathbf{x}$ in \eqref{newmax-minLP} can often be handled via its Lagrangian dual problem, as shown in the next section.

\subsection{Solving Problem (P1)}

We now solve problem (P1) using the constant-norm linear transform.
Applying Theorem~\ref{theorem_1}, (P1) can be equivalently transformed into the following problem:
\begin{subequations}
	\begin{align}
		\textbf{(P2):} \quad
		\max_{\mathbf{x}, \mathbf{z}, \{\mathbf{G}_m\}} \min_{m} \;\;
		&\hspace{1pt}
		\bar{\gamma}_m (\mathbf{x}, \mathbf{z}, \mathbf{G}_m)  \\
		\mathop{\mathrm{subject\;\;to}} \;\;
		&
		\text{(C1):} \;
		\left| x_i \right| = \left| z_i \right| = 1, \ \forall i, \\
		\mathop{\mathrm{or}} \;\;
		&
		\text{(C2):} \;
		\left\| \mathbf{x}_n \right\|_2 = \left\| \mathbf{z}_n \right\|_2 = \sqrt{N_T}, \ \forall n.
	\end{align}
\end{subequations}
The transformed objective function $\bar{\gamma}_m (\mathbf{x}, \mathbf{z}, \mathbf{G}_m) $ is given by
\begin{multline} \label{linear_m}
	\bar{\gamma}_m (\mathbf{x}, \mathbf{z}, \mathbf{G}_m) 
	=
	4 p \varsigma_m^2
	\mathop{\mathfrak{Re}}
	\Big\lbrace
	\mathbf{x}^{\mathsf{H}} \mathbf{u}_m (\mathbf{z}, \mathbf{G}_m)
	\Big\rbrace
	\\
	+
	2 p \varsigma_m^2
	c_m(\mathbf{z}, \mathbf{G}_m) 
	+
	\mathbf{\Phi}_{\mathsf{p}}^{(m,m)},
\end{multline}
where $\mathbf{G}_m  \triangleq [\mathbf{g}_m^{[1]}, \ldots, \mathbf{g}_m^{[{R_m}]}]$,
and the linear coefficient of $\mathbf{x}$ in \eqref{linear_m} is given by
\begin{equation}
	\mathbf{u}_m (\mathbf{z}, \mathbf{G}_m)
	=
	\sum_{r=1}^{R_m}
	\varrho_m^{[r]}
	\left(
	\big( \delta_m^{[r]} \mathbf{I} - \mathbf{M}_m^{[r]} \big) \mathbf{z} 
	+ 
	\big( \mathbf{\dot{U}}_m^{[r]} \big)^{\mathsf{H}} \mathbf{g}_m^{[r]}
	\right),
\end{equation}
the matrix $\mathbf{M}_m^{[r]}$ is given by
\begin{equation}
	\mathbf{M}_m^{[r]}
	=
	\sum_{k=1}^{K}
	\sum_{t=1}^{T_k}
	p \sigma_k^2 \kappa_k^{[t]}
	\left( \big( \mathbf{g}_m^{[r]} \big)^{\mathsf{H}} \mathbf{\Upsilon}_k^{[t]} \right)^{\mathsf{H}}
	\left( \big( \mathbf{g}_m^{[r]} \big)^{\mathsf{H}} \mathbf{\Upsilon}_k^{[t]} \right),
\end{equation}
the parameter $\delta_m^{[r]}$ denotes the trace of the positive semi-definite matrix $\mathbf{M}_m^{[r]}$,
and $c_m (\mathbf{z}, \mathbf{G}_m)$ is given by
\begin{equation}
	c_m (\mathbf{z}, \mathbf{G}_m)
	=
	\sum_{r=1}^{R_m}
	\varrho_m^{[r]}
	\left( 
	\mathbf{z}^{\mathsf{H}} \mathbf{M}_m^{[r]} \mathbf{z} 
	- 
	2 \delta_m^{[r]} N_T N
	- 
	\varepsilon^2 \big\| \mathbf{g}_m^{[r]} \big\|_2^2
	\right).
\end{equation}

Problem (P2) can now be solved in an iterative manner. When $\mathbf{x}$ is held fixed, the optimal $\mathbf{z}^{\star}$ and $(\mathbf{g}_m^{[r]})^{\star}$ are given by
\begin{align}
	\mathbf{z}^{\star} 
	= 
	\mathbf{x},  
	\ \text{and} \
	\big( \mathbf{g}_m^{[r]} \big)^{\star}
	= 
	(\mathbf{\Sigma} (\mathbf{x}))^{-1} \mathbf{\dot{U}}_m^{[r]} \mathbf{x}. \label{54}
\end{align}
When $\mathbf{z}$ and $\mathbf{g}_m^{[r]}$ are held fixed, updating $\mathbf{x}$ requires us to solve
the following problem:
\begin{subequations}
	\begin{align}
		\textbf{(P2-1):} \quad
		\max_{\mathbf{x}} \ \ \min_{m} \;\;
		&\hspace{1pt}
		\bar{\gamma}_m (\mathbf{x}, \mathbf{z}, \mathbf{G}_m)  \\
		\mathop{\mathrm{subject \ to}} \;\;
		&
		\text{(C1):} \;
		\left| x_i \right| = 1, \ \forall i, \label{cmconst} \\
		\mathop{\mathrm{or}} \;\;
		&
		\text{(C2):} \;
		\left\| \mathbf{x}_n \right\|_2 = \sqrt{N_T}, \ \forall n.  \label{cnconst}
	\end{align}
\end{subequations}
Solving the above problem over the nonconvex precoding constraints may still be challenging.
Next, we show that (P2-1) can be solved to global optimality with low complexity,
provided that a specific condition is satisfied.

\subsection{Global Optimality of Problem (P2-1)}

To solve (P2-1) efficiently, we relax the phase-only precoding constraint \eqref{cmconst} into $\left| x_i \right| \leq 1$, 
and relax the phase-amplitude precoding constraint \eqref{cnconst} into $\left\| \mathbf{x}_n \right\|_2 \leq \sqrt{N_T}$.
Consequently, the relaxed version of (P2-1) is given by
{\mathtoolsset{showonlyrefs=false}
\begin{subequations} \label{relaxedPro}
	\begin{align}
		\max_{\mathbf{x}} \ \ \min_{m} \;\;
		&\hspace{1pt}
		\bar{\gamma}_m (\mathbf{x}, \mathbf{z}, \mathbf{G}_m)   \label{38a} \\
		\mathop{\mathrm{subject \ to}} \;\;
		&
		\text{(C1'):} \;
		\left| x_i \right| \leq 1, \ \forall i,  \\
		\mathrm{or} \;\;
		&
		\text{(C2'):} \;
		\left\| \mathbf{x}_n \right\|_2 \leq \sqrt{N_T}, \ \forall n.
	\end{align}
\end{subequations}}Note that the optimal objective value of the relaxed problem (\ref{relaxedPro}) is an \emph{upper bound} to that of problem (P2-1) as the relaxed problem (\ref{relaxedPro}) has a larger feasible set than problem (P2-1).

Next, observe that the problem \eqref{relaxedPro} is convex and the strong duality holds.
Consider its dual problem, which is given by
\begin{subequations} \label{min-max}
	\begin{align}
		\min_{\boldsymbol{\nu}} \hspace{6.5pt} \max_{\mathbf{x}}  \;\;
		&\hspace{1pt}
		L ( \mathbf{x}, \boldsymbol{\nu} )
				\triangleq
				\sum_{m=1}^{M} \nu_m
				\bar{\gamma}_m (\mathbf{x}, \mathbf{z}, \mathbf{G}_m)  \label{36a} \\
		\mathrm{subject\;to} \;\;
		&\hspace{1pt}
		\boldsymbol{\nu} \in \Delta^{M-1},  \\
		&
		\text{(C1'):} \;
		\left| x_i \right| \leq 1, \ \forall i, \\
		\mathrm{or} \;\;
		&
		\text{(C2'):} \;
		\left\| \mathbf{x}_n \right\|_2 \leq \sqrt{N_T}, \ \forall n,
	\end{align}
\end{subequations}
where $\Delta^{M-1}$ denotes the standard simplex in $\mathbb{R}^M$,
and $\boldsymbol{\nu}$ is the dual variable.

For fixed $\boldsymbol{\nu}$, the inner maximization in \eqref{36a} can be solved in closed form.
Note that the objective function is linear in $\mathbf{x}$ with coefficient
\begin{equation} 
	\boldsymbol{\pi} (\boldsymbol{\nu})
	\triangleq
	\sum_{m=1}^{M} 
	\nu_m  \varsigma_m^2 \mathbf{u}_m (\mathbf{z}, \mathbf{G}_m).
\end{equation}
For the phase-only case, the optimal $\mathbf{x}$ for the inner maximization problem is obtained by setting its entries to match the phases of $\boldsymbol{\pi} (\boldsymbol{\nu})$, i.e., 
\begin{equation} \label{37}
	\mathbf{x}^{\star} (\boldsymbol{\nu})
	=
	\left[
		e^{j \angle{[\boldsymbol{\pi}(\boldsymbol{\nu})]_1}}, 
		\ldots,
		e^{j \angle{[\boldsymbol{\pi}(\boldsymbol{\nu})]_{N_TN}}} 
	\right]^\mathsf{T},
\end{equation}
For the phase-amplitude case, the inner maximizer is obtained by normalizing each length-$N_T$ block of $\boldsymbol{\pi} (\boldsymbol{\nu})$ as follows:
\begin{equation} \label{dig_sol}
	\mathbf{x}^{\star} (\boldsymbol{\nu})
	=
	\left[ 
	\sqrt{N_T}
	\frac{\boldsymbol{\pi}_1^{\mathsf{T}} (\boldsymbol{\nu})}{\left\| \boldsymbol{\pi}_1 (\boldsymbol{\nu}) \right\|_2},
	\ldots,
	\sqrt{N_T}
	\frac{\boldsymbol{\pi}_N^{\mathsf{T}} (\boldsymbol{\nu})}{\left\| \boldsymbol{\pi}_N (\boldsymbol{\nu}) \right\|_2}
	\right]^{\mathsf{T}},
\end{equation}
where $\boldsymbol{\pi}_n (\boldsymbol{\nu})$ denotes the $n$-th block vector of $\boldsymbol{\pi} (\boldsymbol{\nu})$.

Substituting $\mathbf{x}^{\star} (\boldsymbol{\nu})$ into (\ref{36a}), the problem \eqref{min-max} can be rewritten as follows:
\begin{subequations}
	\begin{align}
		\textbf{(P2-2):} \quad
		\mathop{\mathrm{minimize}}_{\boldsymbol{\nu}}  \;\;
		& \;
		L \left( \mathbf{x}^{\star} (\boldsymbol{\nu}), \boldsymbol{\nu} \right)  \\
		\mathop{\mathrm{subject \, to}} \;\;
		&\;
		\boldsymbol{\nu} \in \Delta^{M-1},	
	\end{align}
\end{subequations}
where for the phase-only precoding case, the objective is
\begin{align} \label{obj_36}
	L ( \mathbf{x}^{\star} (\boldsymbol{\nu}), \boldsymbol{\nu} )
	& = 
	4 p
	\left\| \boldsymbol{\pi} (\boldsymbol{\nu}) \right\|_{1}  \notag  \\
	& + 
	\sum_{m=1}^{M} \nu_m \Big( 2 p \varsigma_m^2
	c_m (\mathbf{z}, \mathbf{G}_m) 
	+
	\mathbf{\Phi}_{\mathsf{p}}^{(m,m)} \Big),
\end{align}
and for the phase-amplitude precoding case, the objective is
\begin{align} \label{obj_2}
	L ( \mathbf{x}^{\star} (\boldsymbol{\nu}), \boldsymbol{\nu} )
	& = 
	4 p \sqrt{N_T}
	\sum_{n=1}^{N}
	\left\| \boldsymbol{\pi}_n (\boldsymbol{\nu}) \right\|_{2}  \notag  \\
	& + 
	\sum_{m=1}^{M} \nu_m \left( 2 p \varsigma_m^2
	c_m (\mathbf{z}, \mathbf{G}_m) 
	+
	\mathbf{\Phi}_{\mathsf{p}}^{(m,m)} \right).
\end{align}
Problem (P2-2) is convex in $\boldsymbol{\nu}$ with dimension $M \ll N_T N$.
Hence, $\boldsymbol{\nu}^{\star}$ can be obtained very efficiently, and $\mathbf{x}^{\star} ( \boldsymbol{\nu}^{\star} )$ can be obtained from $\boldsymbol{\nu}^\star$ via \eqref{37} and \eqref{dig_sol}.

\begin{algorithm}[t]
	\caption{\strut Precoder Design for Multi-Target MIMO Sensing with Signal-Dependent Interfering Scatterers}
	\begin{algorithmic}[1] \label{Algorithm_01}
		
		\vspace{3pt}
		\STATE Initialize $\mathbf{x}$.
		
		\REPEAT
		
		\STATE Update $\mathbf{z} = \mathbf{x}$.
		
		\STATE Update $\mathbf{g}_m^{[r]}$ by \eqref{54} for all $r$ and $m$.
		
		\STATE Update $\boldsymbol{\nu}$ by solving the convex problem (P2-2).
		
		\STATE Update $\mathbf{x}$ by \eqref{37} or \eqref{dig_sol}.
		
		\STATE \hspace{-8.4pt}* Update $\mathbf{x}$ with trial extrapolation $\mathbf{x}_{\mathsf{try}}$ if better.
		
		\UNTIL{\textbf{convergence}}
		
		\STATE \textbf{output} Optimized $\mathbf{x}$. 
	\end{algorithmic}
\end{algorithm}

However, there is a crucial caveat in this proposed approach. Observe that
obtaining $\mathbf{x}^{\star} ( \boldsymbol{\nu}^{\star} )$ based on \eqref{37} and \eqref{dig_sol} works only if the following conditions hold 
\begin{align}
	\text{phase-only:} \;
	&
	\left[ \boldsymbol{\pi} (\boldsymbol{\nu}^{\star}) \right]_i \neq 0, \hspace{6.5pt} i = 1, 2, \ldots, N_TN, \label{condition} \\ 
	\text{phase-amplitude:} \;
	& \hspace{5pt}
	\boldsymbol{\pi}_n (\boldsymbol{\nu}^{\star}) \neq \mathbf{0}, \ n = 1,2, \ldots, N. \label{condition2}
\end{align}
In this case, 
the solution $\mathbf{x}^{\star} (\boldsymbol{\nu}^{\star})$, given by
\eqref{37} for the phase-only precoding case and by \eqref{dig_sol} for the phase-amplitude precoding case,
is the \emph{unique} maximizer of $L ( \mathbf{x}, \boldsymbol{\nu}^{\star} )$.
Otherwise, the optimal $\mathbf{x}^{\star} ( \boldsymbol{\nu}^{\star} )$ for fixed $\boldsymbol{\nu}^{\star}$ is not unique.

For the relaxed problem \eqref{relaxedPro}, which is convex,
whenever $\mathbf{x}^{\star} (\boldsymbol{\nu}^{\star})$ corresponding to the optimal dual variable $\boldsymbol{\nu}^{\star}$ is unique,
it must be the optimal solution of the primal optimization problem \eqref{relaxedPro}. 
This can be justified by \cite[Corollary 28.1.1]{rockafellar1970convex}.

Furthermore, in this case, since \eqref{relaxedPro} is a relaxed version of (P2-1) and the optimal $\mathbf{x}^{\star}(\boldsymbol{\nu}^{\star})$ already satisfies
the constant-modulus and constant-norm constraints for the phase-only
and phase-amplitude cases, respectively, it must also be 
an optimal solution of the original nonconvex problem (P2-1).

We summarize the above result in the following theorem.
\begin{theorem}
	An optimal solution to problem (P2-1) with phase-only constraint (C1) can be obtained by solving problem (P2-2) with objective function \eqref{obj_36}, if its optimal solution of $\boldsymbol{\nu}^{\star}$ as in \eqref{37} satisfies the condition \eqref{condition}. 
	The same conclusion holds for the phase-amplitude case with the objective function of problem (P2-2) replaced by \eqref{obj_2}, the condition replaced by \eqref{condition2}, in which case the solution $\mathbf{x}^{\star}(\boldsymbol{\nu}^{\star})$ as in \eqref{dig_sol} is an optimal solution to problem (P2-1) with constraint (C2).
\end{theorem}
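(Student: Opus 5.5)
The proof is a relaxation-plus-duality argument, and it proceeds in four steps.

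\emph{Step 1: the relaxation is a well-posed convex problem.} For fixed $\mathbf{z}$ and $\{\mathbf{G}_m\}$, each $\bar{\gamma}_m(\mathbf{x},\mathbf{z},\mathbf{G}_m)$ in \eqref{linear_m} is affine in $\mathbf{x}$, viewed over the reals through $\mathfrak{Re}\{\mathbf{x}^{\mathsf{H}}\mathbf{u}_m\}$. Hence $\min_m \bar{\gamma}_m$ is concave. The relaxed sets (C1') and (C2') are compact and convex, being products of discs or of balls. So \eqref{relaxedPro} is a convex program with a compact feasible set. Because $\mathbf{x}=\mathbf{0}$ is strictly feasible, Slater's condition holds. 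I would also write the max-min in epigraph form, or equivalently use $\min_m a_m=\min_{\boldsymbol{\nu}\in\Delta^{M-1}}\sum_m\nu_m a_m$. A minimax theorem (Sion's, using compactness of both the simplex and the $\mathbf{x}$-set, with $L$ bilinear in $(\mathbf{x},\boldsymbol{\nu})$) then gives
\[
\max_{\mathbf{x}}\min_{\boldsymbol{\nu}} L=\min_{\boldsymbol{\nu}}\max_{\mathbf{x}} L .
\]
This identifies the optimal value of \eqref{relaxedPro} with that of \eqref{min-max}, and therefore with that of (P2-2).

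\emph{Step 2: closed-form inner maximizer.} For fixed $\boldsymbol{\nu}$, $L$ equals $4p\,\mathfrak{Re}\{\mathbf{x}^{\mathsf{H}}\boldsymbol{\pi}(\boldsymbol{\nu})\}$ plus a constant. Maximizing over $|x_i|\le 1$ decouples entrywise: $\mathfrak{Re}\{\bar x_i\pi_i\}\le|\pi_i|$, with equality iff $x_i=e^{j\angle\pi_i}$ when $\pi_i\neq0$. This gives \eqref{37} and \eqref{obj_36}, and shows the maximizer is unique under \eqref{condition}. In the same way, by Cauchy--Schwarz on each block, $\|\mathbf{x}_n\|\le\sqrt{N_T}$ gives \eqref{dig_sol} and \eqref{obj_2}, with uniqueness under \eqref{condition2}.

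\emph{Step 3: recovering the primal optimum from $\boldsymbol{\nu}^\star$.} This is the crux. Let $(\mathbf{x}^\dagger,\boldsymbol{\nu}^\star)$ be a saddle point; one exists because both sets are compact and convex and $L$ is bilinear. By the saddle property, $\mathbf{x}^\dagger$ maximizes $L(\cdot,\boldsymbol{\nu}^\star)$. Since that maximizer is unique, $\mathbf{x}^\dagger=\mathbf{x}^\star(\boldsymbol{\nu}^\star)$. Any primal optimal $\mathbf{x}$ paired with any dual optimal $\boldsymbol{\nu}$ forms a saddle point, so $\mathbf{x}^\star(\boldsymbol{\nu}^\star)$ attains $\max_{\mathbf{x}}\min_m\bar{\gamma}_m$ in \eqref{relaxedPro}. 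This is exactly the content of \cite[Corollary 28.1.1]{rockafellar1970convex}. The main obstacle is stating this carefully, because the dual objective alone would not identify $\mathbf{x}$ without uniqueness. I would argue directly:
\[
\min_m\bar{\gamma}_m(\mathbf{x}^\dagger)\;\ge\; L(\mathbf{x}^\dagger,\boldsymbol{\nu})\ \text{-saddle value}\;=\;\text{optimal value},
\]
and then invoke uniqueness to identify $\mathbf{x}^\dagger$ with $\mathbf{x}^\star(\boldsymbol{\nu}^\star)$.

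\emph{Step 4: feasibility in the original problem.} Under \eqref{condition}, every entry of \eqref{37} has unit modulus, so it satisfies (C1). Under \eqref{condition2}, every block of \eqref{dig_sol} has norm exactly $\sqrt{N_T}$, so it satisfies (C2). The point is therefore feasible for (P2-1). Its objective equals the optimal value of the relaxation, which upper-bounds the optimal value of (P2-1). Hence $\mathbf{x}^\star(\boldsymbol{\nu}^\star)$ is optimal for (P2-1), in both the phase-only and the phase-amplitude case.
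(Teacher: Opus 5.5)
Your proposal is correct and follows essentially the same route as the paper. You relax (C1)/(C2) to the convex sets (C1$'$)/(C2$'$), invoke strong duality for the relaxed max-min, and compute the inner maximizer in closed form. Its uniqueness under the nonvanishing conditions identifies $\mathbf{x}^{\star}(\boldsymbol{\nu}^{\star})$ as the relaxed optimum, and its feasibility for the original constraints gives optimality for (P2-1).

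The only difference is in Step~3. The paper cites Rockafellar's Corollary~28.1.1 for the primal-recovery step. You prove that step directly: any primal optimum paired with $\boldsymbol{\nu}^{\star}$ is a saddle point, so it maximizes $L(\cdot,\boldsymbol{\nu}^{\star})$ and must equal the unique maximizer. That argument is self-contained and sufficient on its own. The garbled displayed inequality in that step adds nothing and can simply be dropped.
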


Empirically, for the scenarios considered in this paper, where the dimension of $\mathbf{x}$ is much greater than the number of targets, the condition \eqref{condition} or \eqref{condition2} is almost always satisfied.

However, if the condition \eqref{condition} or \eqref{condition2} is not satisfied,
the recovery of the primal optimal variables from the optimal dual would be challenging.

Finally, if the condition \eqref{condition} or \eqref{condition2} is satisfied at every iteration, the proposed algorithm is guaranteed to converge,
and the converged solution is a stationary point of the original problem (P1).
The details are shown in Algorithm~\ref{Algorithm_01}.

We remark that the convergence rate of the proposed algo\-rithm can be accelerated using momentum-based methods \cite{10605808}.
Specifically, at each iteration, 
we try a momentum-based extrapolated point $\mathbf{x}_{\mathsf{try}}$. It is accepted if it yields a better objective value; otherwise, the update of $\mathbf{x}$ by \eqref{37} or \eqref{dig_sol} is retained.

\section{Joint Precoding and Combining Design}
\label{Section_06}

In the previous sections, we consider a MIMO sensing system equipped with a fully digital receiver, where only the precoder needs to be optimized.
In this section, we further consider the case where the receiver is implemented as an analog combiner followed by single-RF chain processing, as illustrated in Fig.~\ref{Figure_02}, and the precoder and the combiner need to be jointly optimized.

\begin{figure}[!t]
	\centering
	\includegraphics[width = 1\linewidth]{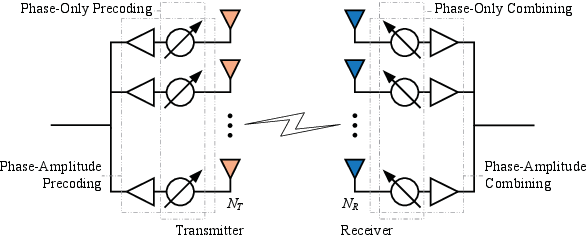}
	\caption{MIMO sensing with joint precoding and combining.}
	\label{Figure_02}
\end{figure}

Similar to the precoder $\mathbf{x}$,
the sequence of receive combiners across $N_R$ antennas and $N$ symbol periods can be expressed~as
\begin{align}
	\mathbf{w}
	\triangleq
	\mathsf{vec}
	\left(
	\left[ 
	\mathbf{w}_1, 
	\mathbf{w}_2, 
	\ldots, 
	\mathbf{w}_N
	\right]
	\right),
\end{align}
where $\mathbf{w}_n \in \mathbb{C}^{N_R \times 1}$ denotes the receive combiner designed~for the $n$-th symbol period.
We again consider two cases, the phase-only combiner and phase-amplitude combiner, for which $\mathbf{w}$ satisfies their respective constraints, as below:
\begin{align}
	& \text{(C3):}
	\quad
	\left| w_{i} \right| 
	= 1, 
	\quad
	i = 1, 2, \ldots, N_RN,  \\
	& \text{(C4):}
	\quad
	\left\| \mathbf{w}_n \right\|_2 
	= 
	\sqrt{N_R}, 
	\quad
	n = 1, 2, \ldots, N.
\end{align}

At each symbol period, the echoes received across the $N_R$ antennas are combined into a scalar.
The resulting $N$ scalars are stacked into a vector $\mathbf{r} \in \mathbb{C}^{N \times 1}$ as
\begin{align} \label{56}
	\mathbf{r}
	& =
	\mathbf{\bar{W}} \mathbf{y}  \notag  \\
	& =
	\sum_{m=1}^{M}
	\alpha_m \sqrt{p}
	\mathbf{\bar{W}} \mathbf{U}_m \mathbf{x}
	+
	\sum_{k=1}^{K}
	\beta_k \sqrt{p}
	\mathbf{\bar{W}} \mathbf{\Upsilon}_k \mathbf{x}
	+
	\mathbf{\bar{n}},
\end{align}
and can be equivalently rewritten in terms of the vectorized combiner $\mathbf{w}$ as
\begin{align} \label{57}
	\mathbf{r}
	=
	\sum_{m=1}^{M}
	\alpha_m \sqrt{p}
	\mathbf{\bar{X}} \mathbf{U}_m^{\mathsf{T}} \mathbf{w}
	+
	\sum_{k=1}^{K}
	\beta_k \sqrt{p} \mathbf{\bar{X}} \mathbf{\Upsilon}_k^{\mathsf{T}} \mathbf{w}
	+
	\mathbf{\bar{n}},
\end{align}
where the block-diagonal matrices $\mathbf{\bar{W}}$ and $\mathbf{\bar{X}}$ are defined as
\begin{align}
	\mathbf{\bar{W}}
	\triangleq
	\begin{bmatrix}
		\mathbf{w}_1^{\mathsf{T}} & \cdots & \mathbf{0} \\
		\vdots & \ddots & \vdots \\
		\mathbf{0} & \cdots & \mathbf{w}_N^{\mathsf{T}}
	\end{bmatrix},
	\ \text{and} \
	\mathbf{\bar{X}}
	\triangleq
	\begin{bmatrix}
		\mathbf{x}_1^{\mathsf{T}} & \cdots & \mathbf{0} \\
		\vdots & \ddots & \vdots \\
		\mathbf{0} & \cdots & \mathbf{x}_N^{\mathsf{T}}
	\end{bmatrix},
\end{align}
and the additive noise $\mathbf{\bar{n}} \sim \mathcal{CN} (0, \varepsilon^2 N_{R}\mathbf{I}_{N})$.

Following the same derivations as in Sections~\ref{Section_03} and \ref{Section_04}, the sensing performance metric for estimating $\theta_m$ can be derived in two different forms, either from \eqref{56} as a function of the vectorized precoder $\mathbf{x}$:
\begin{align} \label{metric_analog}
	& \gamma_m (\mathbf{x}, \mathbf{w})
	\triangleq  \mathbf{\Phi}_{\mathsf{p}}^{(m,m)} +  \notag  \\
	& 2 p \varsigma_m^2
	\sum_{r=1}^{R_m}
	\varrho_m^{[r]}
	\left( \mathbf{\bar{W}} \mathbf{\dot{U}}_m^{[r]} \mathbf{x} \right)^{\mathsf{H}}
	\left( \mathbf{\Sigma} (\mathbf{x}, \mathbf{w}) \right)^{-1}
	\left( \mathbf{\bar{W}} \mathbf{\dot{U}}_m^{[r]} \mathbf{x} \right),
\end{align}
where $\mathbf{\Sigma} (\mathbf{x}, \mathbf{w})$ is the covariance of the interference-plus-noise term, expressed as
\begin{multline}
	\mathbf{\Sigma}(\mathbf{x}, \mathbf{w})
	=
	\sum_{k=1}^{K}
	\sum_{t=1}^{T_k}
	p \sigma_k^2 \kappa_k^{[t]}
	\left( \mathbf{\bar{W}} \mathbf{\Upsilon}_k^{[t]} \mathbf{x} \right) \left( \mathbf{\bar{W}} \mathbf{\Upsilon}_k^{[t]} \mathbf{x} \right)^{\mathsf{H}} \\
	+
	\varepsilon^2 N_{R}\mathbf{I}_{N},
\end{multline}
or based on \eqref{57}, the metric $\gamma_m (\mathbf{x}, \mathbf{w})$ can be equivalently rewritten as a function of the vectorized combiner $\mathbf{w}$:
\begin{equation}
	\begin{aligned} \label{61}
		&
		\gamma_m (\mathbf{x}, \mathbf{w})
		\triangleq  \mathbf{\Phi}_{\mathsf{p}}^{(m,m)} +  \\
		&
		2 p \varsigma_m^2
		\sum_{r=1}^{R_m}
		\varrho_m^{[r]} 
		\left( \mathbf{\bar{X}} \big( \mathbf{\dot{U}}_m^{[r]} \big)^{\mathsf{T}} \mathbf{w} \right)^{\mathsf{H}}
		\left( \mathbf{\Sigma} (\mathbf{x}, \mathbf{w}) \right)^{-1}
		\left( \mathbf{\bar{X}} \big( \mathbf{\dot{U}}_m^{[r]} \big)^{\mathsf{T}} \mathbf{w} \right), 
	\end{aligned}
\end{equation}
where $\mathbf{\Sigma} (\mathbf{x}, \mathbf{w})$ can also be equivalently rewritten as
\begin{multline}
	\mathbf{\Sigma}(\mathbf{x}, \mathbf{w})
	=
	\sum_{k=1}^{K}
	\sum_{t=1}^{T_k}
	p \sigma_k^2 \kappa_k^{[t]}
	\left( \mathbf{\bar{X}} \big( \mathbf{\Upsilon}_k^{[t]} \big)^{\mathsf{T}} \mathbf{w} \right) 
	\left( \mathbf{\bar{X}} \big( \mathbf{\Upsilon}_k^{[t]} \big)^{\mathsf{T}} \mathbf{w} \right)^{\mathsf{H}}  \\
	+
	\varepsilon^2 N_{R}\mathbf{I}_{N}.
\end{multline}

The joint precoding and combining design problem can then be formulated as follows using either of the two forms above:
\begin{subequations}
	\begin{align}
		\textbf{(P3):} \quad
		\max_{\mathbf{x}, \mathbf{w}} \ \ \min_{m} \;\;
		&\hspace{1pt}
		\gamma_m (\mathbf{x}, \mathbf{w})  \\
		\mathop{\mathrm{subject \ to}} \;\;
		&
		\text{(C5):} \;\;
		\left| x_i \right| = 1, \ \left| w_{j} \right| = 1, \ \forall i, j,  \\
		\mathop{\mathrm{or}} \;\;
		&
		\text{(C6):} \;\;
		\left\| \mathbf{x}_n \right\|_2 = \sqrt{N_T}, \ \forall n,  \\
		&
		\hspace{24.5pt}
		\left\| \mathbf{w}_{n} \right\|_2 = \sqrt{N_R}, \ \forall n.
	\end{align}
\end{subequations}
Problem (P3) is a multi-variable max-min fractional programming problem with constant-modulus/norm constraints.
This problem can be tackled using alternating optimization combined with the linear transform.
Specifically, for fixed $\mathbf{w}$, the linear-transform-based algorithm developed in Section~\ref{Section 05} is applied to the objective \eqref{metric_analog} to update $\mathbf{x}$; for fixed $\mathbf{x}$, the same algorithm is applied to the objective \eqref{61} to update $\mathbf{w}$.
The detailed update procedures are similar to Algorithm~\ref{Algorithm_01}.

\section{Multi-Stage Sensing with Scatterers}
\label{Section IV}

The previous sections focus on sensing over a single stage, where $N$ observations are collected to carry out one round of estimation.
However, in practical scenarios, sensing is often performed sequentially across multiple stages, with each stage consisting of multiple symbol periods.
In this section, we extend the proposed one-stage framework to a multi-stage setting.

\begin{figure}[!t]
	\centering
	\includegraphics[width = 1\linewidth]{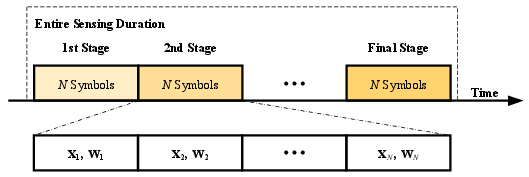}
	\caption{Multi-stage sensing with adaptive precoding and combining.}
	\label{Figure_03}
\end{figure}

Before introducing the multi-stage sensing strategy, we first recall the key assumptions:

\begin{itemize}
	\item 
	The azimuth angles of the targets and the scatterers, i.e., $\theta_m$ and $\eta_k$, are assumed to be constant throughout the entire sensing duration across the multiple stages.
	
	\item
	The pathloss coefficients of the targets and the scatterers, i.e., $\alpha_m$ and $\beta_k$, are assumed to be independent random variables distributed as $\mathcal{CN} ( 0, \varsigma_m^2 )$ and $\mathcal{CN} ( 0,\sigma_k^2 )$, respectively.
	Their second-order statistics $\varsigma_m^2$ and $\sigma_k^2$ are assumed to be known a priori.
	Because we do not track the values of $\alpha_m$ and $\beta_k$, we use this same prior for all the sensing stages\footnote{
		Better performance can be obtained if the posterior distributions of $\alpha_m$ and $\beta_k$ are also tracked across the stages, but at significant complexity cost, not only in computing the posterior distribution, but also in subsequent optimization, because the posterior of $\alpha_m$ and $\beta_k$ would typically not have zero means.
	}.
\end{itemize}
The above assumptions form the basis of the proposed sensing strategy presented below.
Specifically, the entire sensing duration is divided into multiple sensing stages, each consisting of $N$ symbol periods,
as illustrated in Fig.~\ref{Figure_03}. 
Then, the following operations are performed:
\begin{itemize}
	\item 
	The precoders (and combiners,~if present) for each stage are optimized based on the prior from the previous stage.
	The optimization utilizes the single-stage design algorithm presented previously.
	\item 
	At the end of each stage, the posterior of the target angles $\theta_m$ is inferred from the echoes collected within $N$ symbol periods, and then adopted as the prior for the next stage.
\end{itemize}

\newcounter{storeeqcounter}

\begin{figure*}[!t]
	\setcounter{storeeqcounter}{\value{equation}} 
	\begin{align}
	\setcounter{equation}{\value{storeeqcounter}+3} 
		\tilde{q}
		\left( \boldsymbol{\theta} \left| \mathbf{y}^{[i]} \right. \right)
		& \propto
		\tilde{q}
		\left( \boldsymbol{\theta} \left| \mathbf{y}^{[i-1]} \right. \right)
		\mathop{\cdot}
		\mathcal{CN} \left( \mathbf{y}^{[i]} \left| \mathbf{0}, \mathbf{\Sigma} \left( \mathbf{x}^{[i]} \right) + \sum_{m=1}^{M} p \varsigma_m^2 \left( \mathbf{U}_m \mathbf{x}^{[i]} \right) \left( \mathbf{U}_m \mathbf{x}^{[i]} \right)^{\mathsf{H}} \right. \right)  \label{posterior_d}  \\
		\tilde{q} \left( \boldsymbol{\theta} \left| \mathbf{r}^{[i]} \right. \right)
		& 
		\propto
		\tilde{q} \left( \boldsymbol{\theta} \left| \mathbf{r}^{[i-1]} \,\right. \right)
		\mathop{\cdot}\,
		\mathcal{CN} \left( \mathbf{r}^{[i]} \left| \hspace{1pt} \mathbf{0}, \mathbf{\Sigma} \left( \mathbf{x}^{[i]}, \mathbf{w}^{[i]} \right)
		+ \sum_{m=1}^{M} p \varsigma_m^2 \left( \mathbf{\bar{W}}^{[i]} \mathbf{U}_m \mathbf{x}^{[i]} \right) \left( \mathbf{\bar{W}}^{[i]} \mathbf{U}_m \mathbf{x}^{[i]} \right)^{\mathsf{H}} \right. \right) \label{posterior_a}
	\end{align}
	\setcounter{equation}{\value{storeeqcounter}} 
	\hrule
\end{figure*}

We now present the formulas for the iterative updates of the posterior.
As analyzed in Section~\ref{Section_03}, to make the likelihood function tractable, the aggregate interference-plus-noise term is modelled as a Gaussian random vector with matching mean and covariance.
For the fully digital receiver case, the posterior of the target parameters $\mathbf{v}$ at the $i$-th stage is expressed as
\begin{equation} \label{jointPos}
	\tilde{q} \big( \mathbf{v} \mathop{\big|} \mathbf{y}^{[i]} \big)
	\propto
	q ( \mathbf{v} )
	\mathop{\cdot}
	\mathcal{L} \big( \mathbf{y}^{[i]} \mathop{\big|} \mathbf{v} \big),
\end{equation}
where $\mathbf{y}^{[i]}$ denotes the echo signals received at the $i$-th stage, 
and $\mathcal{L} ( \mathbf{y}^{[i]} \mathop{|} \mathbf{v} )$ denotes the likelihood function of $\mathbf{v}$ given $\mathbf{y}^{[i]}$, as specified in \eqref{likelihood}.
Since $\{ \alpha_m \}_{m=1}^{M}$ are independent random variables 
and $\alpha_m \sim \mathcal{CN} (0, \varsigma_m^2)$, the prior $q(\mathbf{v})$ is given by
\begin{equation}
	q(\mathbf{v})
	=
	\tilde{q} \big( \boldsymbol{\theta} \mathop{\big|} \mathbf{y}^{[i-1]} \big)
	\mathop{\cdot}
	\prod_{m=1}^{M}
	\mathcal{CN} 
	\left( \left. \alpha_m \right| 0, \varsigma_m^2 \right),
\end{equation}
where $\tilde{q}(\boldsymbol{\theta} \mathop{|} \mathbf{y}^{[i-1]} )$ denotes the posterior of $\boldsymbol{\theta}$ from the previous stage and is used as the prior for the current stage.
Then, the posterior of $\mathbf{v}$ can be rewritten as
\begin{multline}
	\tilde{q} \big(\mathbf{v} \mathop{\big|} \mathbf{y}^{[i]} \big)
	\propto  \\
	\tilde{q} \big( \boldsymbol{\theta} \mathop{\big|} \mathbf{y}^{[i-1]} \big)
	\mathop{\cdot}
	\prod_{m=1}^{M}
	\mathcal{CN} \left( \left. \alpha_m \right| 0, \varsigma_m^2 \right)
	\mathop{\cdot}
	\mathcal{L} \big( \mathbf{y}^{[i]} \mathop{\big|} \mathbf{v} \big).
\end{multline}
Marginalizing the above posterior over $\boldsymbol{\alpha}$ leads to the posterior of the target angle $\boldsymbol{\theta}$ in \eqref{posterior_d} on the top of the page.
Following similar derivations as in the fully digital receiver case, the posterior of $\boldsymbol{\theta}$ for the single-RF-chain receiver can be obtained in \eqref{posterior_a} on the top of the page.

\addtocounter{equation}{2} 

As the posteriors are updated across the stages, the proposed multi-stage sensing strategy progressively refines the distributions of the target angles over successive sensing stages.

\section{Numerical Results}
\label{Section_08}

In this section, we present simulation results to illustrate the effectiveness and the efficiency of the proposed BCRLB based optimization\footnote{In Appendix~\ref{app:BCRB_vs_MSE}, we provide numerical results to show that the BCRLB is a good approximation to the actual MSE of the maximum a posterior estimator for reasonable SNRs and prior distributions in the angle estimation problem.}.
The simulation environment is set as follows:
\begin{itemize}
	
\item The normalized transmit power $p \varsigma_m^2 / \varepsilon^2$ is set to $-10$ dB.
The normalized interference power $p \sigma_k^2 / \varepsilon^2$ is set to $10$ dB.
		
\item
The numbers of transmit and receive antennas are set to be $N_T = N_R = 8$.
The spacing between adjacent antennas is set to half a wavelength.
	
\item 
A single sensing stage consists of $N = 10$ symbol periods.

\item The initial prior distributions of the target azimuth angles are chosen to be approximately uniform with a smooth raised-cosine taper of smoothing width $\zeta = 0.5^\circ$.
This is to ensure that the BFIM for the prior is well defined. The details of the approximation are described in Appendix~\ref{app:smooth_uniform_prior}. In the rest of the paper, we simply refer to such approximation as the uniform distribution. 
\item 
We use MOSEK \cite{mosek2019} to solve the convex optimization problem in each iteration.
	
\end{itemize}

\subsection{MIMO Sensing Beampatterns in Single Stage}

We use the beampatterns to illustrate the effectiveness of the proposed algorithm and provide insights into the resulting solutions.
The transmit beampattern of the precoder at the $n$-th symbol period is defined as
\begin{align}
	\hat{S}_{n} (\theta)
	\triangleq
	\left| \mathbf{h}_T^{\mathsf{T}}(\theta) \mathbf{x}_n \right|^2, 
	\quad
	n = 1, 2, \ldots, N.
\end{align}
Similarly, the receive beampattern of the combiner at the $n$-th symbol period is defined as
\begin{align}
	\check{S}_{n} (\theta)
	\triangleq
	\left|  \mathbf{w}_n^{\mathsf{T}} \mathbf{h}_R(\theta) \right|^2, 
	\quad
	n = 1, 2, \ldots, N.
\end{align}
Note that the transmit/receive beampatterns do not characterize the system's overall ability to suppress the interference.
To~show that the proposed solutions can generate nulls aligned with the directions of interfering scatterers, we further define the overall beampattern.
Using the digital-receiver case as an example,~the overall beampattern is defined as
\begin{equation}
	Q(\theta)
	=
	\left| 
	\mathbf{c}^{\mathsf{H}} 
	\left(  \mathbf{I}_N \otimes \mathbf{H}(\theta) \right) \mathbf{x}
	\right|^2,
\end{equation}
where $\mathbf{c}$ denotes a linear minimum mean
square error (LMMSE) combiner. 
Note that in practice, the sensing operation is based on the received vector signal and does not require such a combiner.
We use the above definition of beampattern to illustrate that the obtained solutions have intuitive interpretations.

%

In Fig.~\ref{Figure_05}, we present the beampatterns for the case of phase-only transmitter with digital receiver for a scenario in which
two targets are assumed to have uniform distributions in $[20^{\circ}, 60^{\circ}]$ and $[80^{\circ}, 120^{\circ}]$, respectively,
while one interfering scatterer is assumed to be uniformly distributed in $[135^{\circ}, 165^{\circ}]$.
The first three subfigures show the transmit beampatterns obtained over $N = 1$, $4$, and $10$ symbol periods, respectively.
The plots reveal that the obtained solutions attempt to cover the entire target regions.
However, because of the limited DoF of phase-only precoders, the beampattern with $N = 1$ cannot fully illuminate the entire target regions.
As $N$ increases, the resulting beams exhibit a \emph{sweeping-like} pattern, enabling the ensemble of beams to effectively cover the entire target regions.
The last subfigure presents the overall beampattern.
One can observe that the overall beampattern can generate beams aligned with all the target regions, with nulls well formed at the interference directions.
These aligned peaks and nulls help focus power on the desired
angles while reducing interference, thereby improving the sensing
performance.

\begin{figure}[!t]
	\centering
	\includegraphics[width = 0.965 \linewidth]{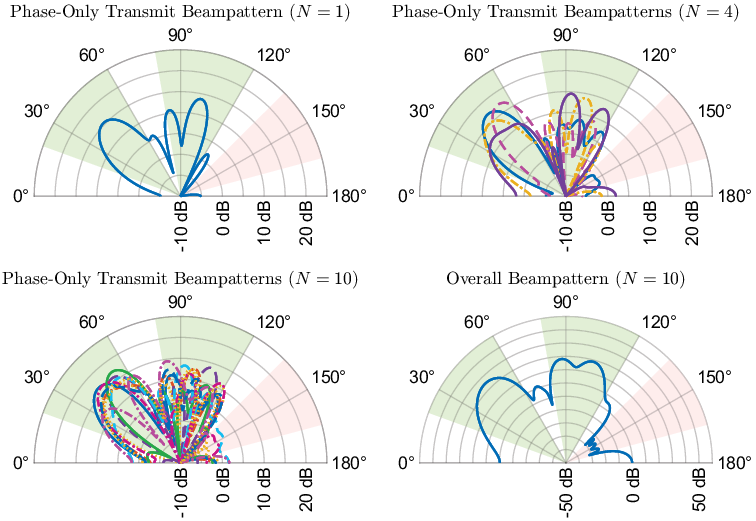}
	\caption{Phase-only transmit beampatterns with different numbers of symbol periods for the two-target case.}
	\label{Figure_05}
\end{figure}

\begin{figure}[!t]
	\centering
	\includegraphics[width = 0.947 \linewidth]{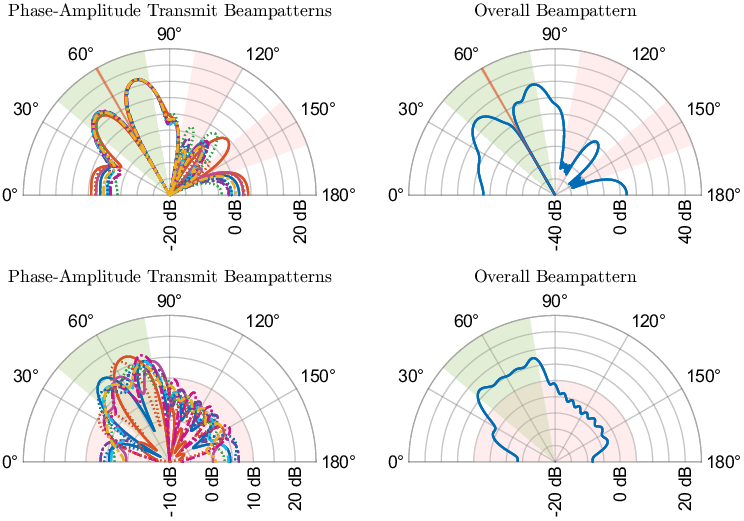}
	\caption{Phase-amplitude transmit beampatterns and overall beampatterns.}
	\label{Figure_08}
\end{figure}

In Fig.~\ref{Figure_08}, we show the phase-amplitude transmit beampatterns and the corresponding overall beampatterns under two different priors of scatterers, again with digital receiver.
In the top row, three interfering scatterers are considered: one at $60^\circ$ inside the target region $[40^\circ,80^\circ]$, and two uniformly distributed in $[100^\circ,120^\circ]$ and $[140^\circ,160^\circ]$, respectively.
In the bottom row, a single interfering scatterer is uniformly distributed over the entire angular region $[0^\circ,180^\circ]$.
Similar to the results of phase-only precoders, the optimized phase-amplitude precoders vary across symbol periods to better facilitate target illumination and interference suppression in the respective spatial directions. 
For the top row, a particularly interesting observation is that the beampatterns 
produce a null at the interfering direction $60^\circ$ while maintaining radiation pattern over the remaining target region.
For the bottom row, since the prior of the scatterer is over the entire angular range, the overall beampattern attempts to concentrate power on the target while keeping low power levels elsewhere.

\begin{figure}[!t]
	\centering
	\includegraphics[width = 0.95\linewidth]{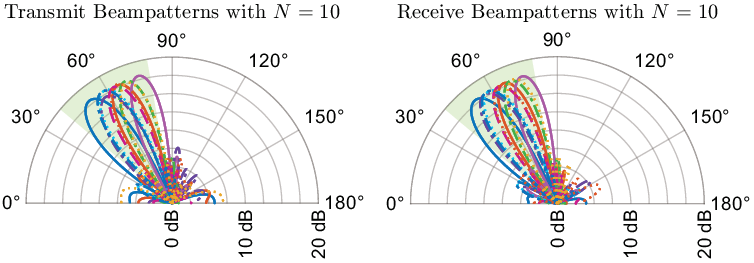}
	\caption{Phase-only transmit and receive beampatterns for a single target.}
	\label{Figure_07}
\end{figure}

Fig.~\ref{Figure_07} presents the beampatterns for the case where both the transmitter and the receiver have phase-only architecture. 
We consider a single target located in the angular region $[40^{\circ}, 80^{\circ}]$, with a uniform prior,
while several interfering scatterers are also present.
Unlike the digital-receiver case, where only the precoders need to be optimized, the phase-only transceiver case requires the precoders and combiners to be jointly designed.
One can see that both the transmit and receive beampatterns show a sweeping-like behaviour across all symbol periods, covering the entire target region.
Moreover, in each individual period, 
the precoder and combiner are steered toward the same direction, ensuring that both transmission and reception concentrate their energy along that direction, thereby enhancing the sensing performance.


\subsection{Evolution of Posterior Distributions over Multiple Stages}

To further illustrate the effectiveness of the proposed framework, we present the evolution of the posterior distributions~of the target angles oover several sensing stages.
Here, we focus on the more challenging phase-only precoding case.

In Fig.~\ref{Figure_09}, we present the results of the scenario in which~the azimuth angle of a single target has a uniform distribution in the range $\left[ 40^{\circ}, 80^{\circ} \right]$, and three interfering scatterers are located in $[100^{\circ}, 120^{\circ}]$, $[140^{\circ},160^{\circ}]$, and at $130^{\circ}$.
The true value of the target azimuth angle is set as $45^{\circ}$, 
and the parameters $\eta_k$, $\alpha_m$, $\beta_k$, and $\mathbf{n}$ are generated from their distributions.
Fig.~\ref{Figure_09} plots the posterior distributions of the target angle after three stages using the precoders (and combiners) designed by the proposed algorithm for both digital and phase-only cases.
One can observe that as the number of stages increases,
the posterior rapidly converges to a distribution with a peak at the actual sensing angle.
This result validates the effectiveness of the proposed algorithms.
Moreover, compared to digital-receiver case,
the convergence of the phase-only case is slower.
This result reveals the trade-off between low-cost hardware and estimation performance.

In Fig.~\ref{Figure_10}, we present the results for the two-target scenario. 
For clarity, we only present the digital-receiver results, as~the analog-receiver results are similar.
The azimuth angles $\theta_1$ and $\theta_2$ of the two targets follow Gaussian distributions with means at $45^\circ$ and $90^\circ$, respectively, and a standard deviation of $10^\circ$.
An interference scatterer is located in $[120^{\circ}, 150^{\circ}]$.
It can~be seen from Fig.~\ref{Figure_10} that the joint posterior distributions rapidly converge to a distribution with a peak around the true values.
This validates that the proposed algorithm is also effective for multi-target scenarios.

\begin{figure}[!t]
	\centering
	\includegraphics[width = 0.975\linewidth]{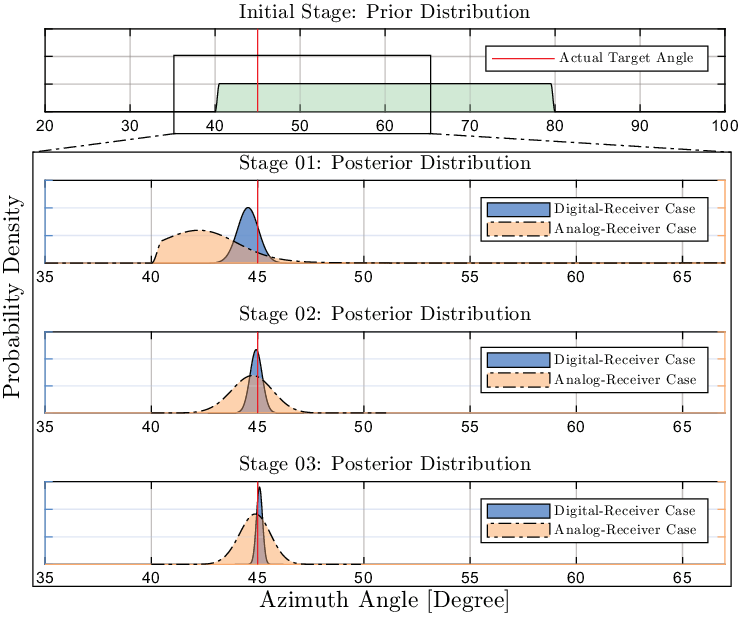}
	\caption{Posterior distributions of the target angle $\theta_1$ over three iterations with adaptive precoders and combiners.}
	\label{Figure_09}
\end{figure}

\begin{figure}[!t]
	\centering
	\includegraphics[width = 0.957\linewidth]{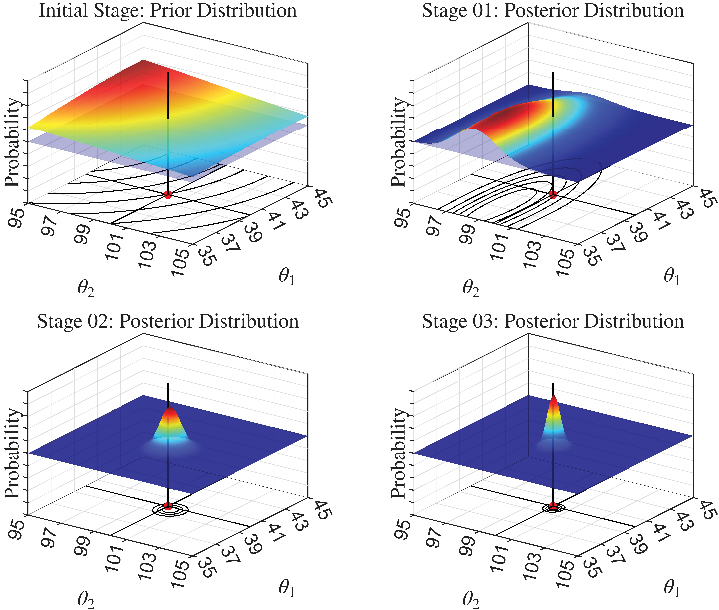}
	\caption{Posterior distributions of the two target angles $\theta_1$ and $\theta_2$ over three iterations with adaptive precoders.}
	\label{Figure_10}
\end{figure}

The preceding results are obtained by fixing the number of symbol periods per stage to $N = 10$.
Next, we show the effect of stage partitioning under a fixed total symbol-period budget.
Specifically, under the same settings as in Fig.~\ref{Figure_09} and using the digital-receiver case as an example, we evaluate the effect of allocating different number of symbol periods per stage. 
We use the standard deviation of the posterior to quantify the remaining uncertainty of the target angle.
One can observe from the right panel of Fig.~\ref{Figure_11} that adopting more stages leads to better sensing performance.
This is because more frequent posterior updates enable the precoders to focus resource on the most likely directions early on.
However, one can observe from the left panel of Fig.~\ref{Figure_11} that when the SNR is low, it is better to choose a larger $N$ per stage, which allows sweeping the angular range and obtains a better posterior.
Overall, 
there is a trade-off between the number of symbols per stage and the number of stages, balancing exploration and exploitation.

\begin{figure}[!t]
	\centering
	\includegraphics[width = 1\linewidth]{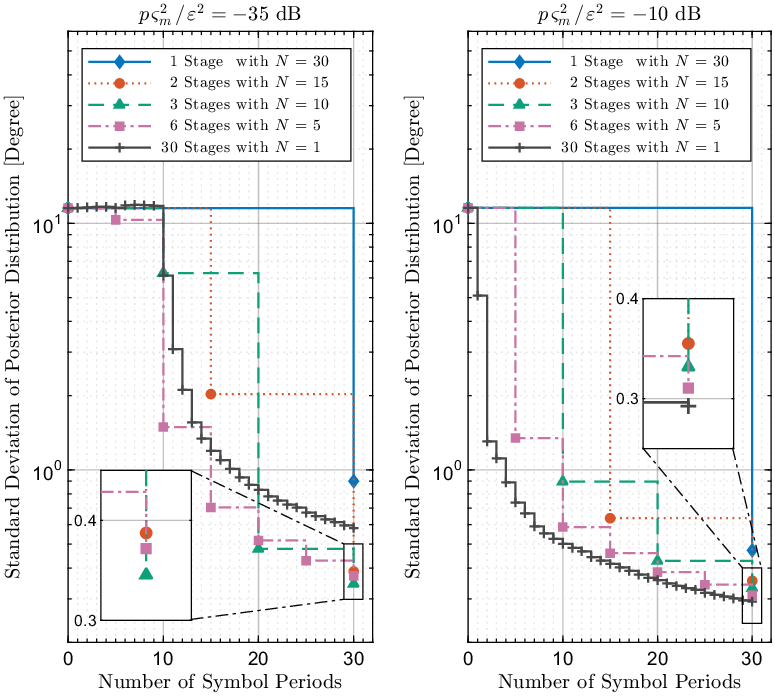}
	\caption{Evolution of the standard deviation of the posterior distribution across the stages for different number of symbols per stage.}
	\label{Figure_11}
\end{figure}

\subsection{Convergence Behaviour of the Optimization Algorithms}

We now show the convergence behaviour of the proposed optimization algorithms for the maximization and the max-min problems, corresponding to the single-target and the multi-target cases, respectively.

We adopt the coordinate ascent (CA) as a baseline. 
In this benchmark, we first quantize the phase shifts of the phase-only precoders into $8$-bit levels, 
which provides a sufficiently fine resolution without a noticeable performance degradation.
Then, we use the CA to alternately optimize each coordinate.
For the maximization problem, the projected gradient ascent (PGA) is adopted as another benchmark.
We set the gradient to be normalized, and the step size to a fixed value of $0.1$.

\begin{figure}[!t]
	\centering
	\includegraphics[width = 1\linewidth]{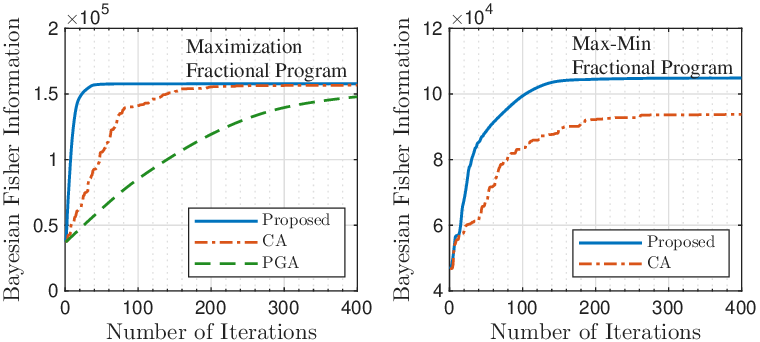}
	\caption{Convergence curves of different algorithms for the maximization and max–min fractional programming problems.}
	\label{Figure_13}
\end{figure}

\begin{table}[!t]
	\centering
	\caption{Typical Runtime of Different Algorithms}
	\renewcommand{\arraystretch}{1}
	\label{TableI}
	\begin{tabular}{l|c c c}
		\toprule
		{Algorithm} 
		& {Runtime/Iteration\;[s]} 
		& {No.\;Iterations} 
		& {Total Runtime\;[s]}  \\ 
		\midrule
		\multicolumn{4}{l}{\hspace{46pt}\textit{Maximization Fractional Program with $N_T N = 80$}}  \\ 
		\midrule
		{Proposed}    & 0.0072  & 51   & 0.3663   \\
		{CA}          & 0.1762  & 237  & 41.7582  \\ 
		{PGA}         & 0.2304  & 813  & 187.3403 \\
		\midrule
		\multicolumn{4}{l}{\hspace{46pt}\textit{Max-Min Fractional Program with $N_T N = 80$}}  \\ 
		\midrule
		{Proposed}    & 0.0077  & 187  & 1.4455  \\
		{CA}          & 0.0671  & 239  & 16.0369 \\  
		\bottomrule
	\end{tabular}
\end{table}

In Fig.~\ref{Figure_13} and Table~\ref{TableI}, we show the convergence curves and typical runtimes of the proposed algorithms and the benchmarks for the maximization and max-min problems, corresponding to the single-target and multi-target cases, respectively. 
It can be observed that the proposed algorithms achieve the best performance while requiring the least amount of runtime. 

\section{Conclusion}
\label{Section_09}

This paper shows that the prior information can be used to separate targets from scatterers in a MIMO sensing framework. We develop sequence-level precoder and combiner design techniques for MIMO sensing with the scatterers modelled as signal-dependent interference. 
We formulate the problem of worst-case BCRLB minimization among all the target angles under both the phase-only and phase-amplitude precoder and combiner architectures. 
A constant-norm extension of the linear transform is developed to handle the resulting max-min fractional program, leading to an efficient iterative algorithm involving low-dimensional convex optimization and closed-form precoder/combiner updates. 
Numerical results demonstrate that the proposed design can synthesize effective sweeping-like beampatterns for target illumination and scatterer suppression, and reveal a trade-off between the number of sensing stages and the number of symbols allocated to each stage.

\appendices

\section{Proof of Lemma~\ref{Lemma_1}}
\label{app: Lemma_01}

To establish the constant-norm linear transform for $f(\mathbf{x})$, we first apply the quadratic transform technique \cite{8314727} to \eqref{def}. 
Specifically, a lower bound for $f(\mathbf{x})$ is given by
\begin{align} \label{qua_trans}
	f(\mathbf{x})
	\geq 
	2 \mathop{\mathfrak{Re}} \left\lbrace \left( \mathbf{A} \mathbf{x} \right)^{\mathsf{H}} \mathbf{g} \right\rbrace
	- 
	\mathbf{g}^{\mathsf{H}} \mathbf{D} (\mathbf{x}) \mathbf{g}, \;\; \forall \, \mathbf{x}, \mathbf{g},
\end{align}
with the equality achieved at
\begin{equation}
	\mathbf{g}
	= 
	\left( \mathbf{D} (\mathbf{x}) \right)^{-1} \mathbf{A} \mathbf{x}.
\end{equation}
The quadratic term in \eqref{qua_trans} can be written as 
\begin{align} \label{13}
	\mathbf{g}^{\mathsf{H}} \mathbf{D} (\mathbf{x}) \mathbf{g}
	= 
	\mathbf{x}^{\mathsf{H}} \mathbf{M} \mathbf{x}
	+
	\mathbf{g}^{\mathsf{H}} \mathbf{C} \mathbf{g},
\end{align}
where the matrix $\mathbf{M}$ is given by
\begin{equation}
	\mathbf{M}
	=  
	\sum_{k} \rho_k
	\left( \mathbf{B}_{k}^{\mathsf{H}} \mathbf{g} \right) 
	\left( \mathbf{B}_{k}^{\mathsf{H}} \mathbf{g} \right)^{\mathsf{H}}.
\end{equation}	
We now eliminate the quadratic term $\mathbf{x}^{\mathsf{H}} \mathbf{M} \mathbf{x}$ in (\ref{13}) using a majorization-minimization technique \cite[Eq.~(26)]{7547360},
\begin{align} \label{18}
	\mathbf{x}^{\mathsf{H}} \mathbf{M} \mathbf{x}
	\leq \mathbf{x}^{\mathsf{H}} \mathbf{L} \mathbf{x}
	& 
	+ \mathbf{z}^{\mathsf{H}} \left( \mathbf{L} - \mathbf{M}  \right) \mathbf{z}  \notag \\
	& 
	+ 2 \mathop{\mathfrak{Re}} \left\lbrace
	\mathbf{x}^{\mathsf{H}} 
	\left( \mathbf{M} - \mathbf{L} \right) \mathbf{z} \right\rbrace,
\end{align}
where $\mathbf{L} \succeq \mathbf{M}$, and the equality is achieved at $\mathbf{z} = \mathbf{x}$.
Then, by~replacing $\mathbf{L}$ with $\delta \mathbf{I}$, where $\delta$ denotes the trace of $\mathbf{M}$ so that $\delta \mathbf{I} \succeq \mathbf{M}$, and by combining with 
the fact that for any $\mathbf{x}$ and $\mathbf{z}$ in the \emph{constant-norm} set $\mathcal{X}
= \{ \mathbf{x} \mathop{|} \|\mathbf{x}\|_2 = \sqrt{N_T N}\}$,
\begin{equation} \label{10}
	\mathbf{x}^{\mathsf{H}} \left( \delta \mathbf{I} \right) \mathbf{x} 
	= \mathbf{z}^{\mathsf{H}} \left( \delta \mathbf{I} \right) \mathbf{z} 
	= \delta N_T N ,
\end{equation}
we obtain 
\begin{multline}
	f(\mathbf{x}) 
	\geq 
	2 \mathop{\mathfrak{Re}} \left\lbrace \mathbf{x}^{\mathsf{H}} \left( \left( \delta \mathbf{I} - \mathbf{M} \right) \mathbf{z} 
	+ 
	\mathbf{A}^{\mathsf{H}} \mathbf{g} \right)  \right\rbrace  \\
	+ 
	\mathbf{z}^{\mathsf{H}} \mathbf{M} \mathbf{z} 
	- 
	2 \delta N_T N
	- 
	\mathbf{g}^{\mathsf{H}} \mathbf{C} \mathbf{g} .
\end{multline}
The equality is achieved when $\mathbf{z} = \mathbf{x}$ and $\mathbf{g} = \left( \mathbf{D} (\mathbf{x}) \right)^{-1} \mathbf{A} \mathbf{x}$.

\section{Proof of Theorem~\ref{theorem_1}}
\label{app: Theorem_01}

By introducing an auxiliary variable $\xi$, the problem~\eqref{newmax-minLP} can be equivalently rewritten as
\begin{subequations}  \label{107}
	\begin{align}
		\mathop{\mathrm{maximize}}_{\substack{\mathbf{x}, \mathbf{z} \\ \{\mathbf{G}_{\mathsf{o},m}\}, \{\mathbf{G}_{\mathsf{c},w}\}}}\;\;
		&\;
		\xi  \\
		\mathop{\mathrm{subject\;\;to}} \quad\,
		&\;
		h_{\mathsf{o},m} \big(\bar{\mathbf{f}}_{\mathsf{o},m}(\mathbf{x}, \mathbf{z}, \mathbf{G}_{\mathsf{o},m})\big)
		\geq \xi, 
		\ \ \forall m,  \label{ineq_1}  \\
		&\;
		h_{\mathsf{c},w}
		\big(\bar{\mathbf{f}}_{\mathsf{c},w} (\mathbf{x}, \mathbf{z}, \mathbf{G}_{\mathsf{c},w}) \big) 
		\geq \Gamma_w, \ \forall w,  \label{ineq_2} \\
		&\;
		\mathbf{x} \in \mathcal{X}, 
	\end{align}
\end{subequations}
where 
\begin{align}
	& \bar{\mathbf{f}}_{\mathsf{o},m}(\mathbf{x}, \mathbf{z}, \mathbf{G}_{\mathsf{o},m})
	=  \notag  \\
	& \qquad \quad 
	\left[ \bar{f}_{\mathsf{o},m}^{[1]}(\mathbf{x}, \mathbf{z}, \mathbf{g}_{\mathsf{o},m}^{[1]}), \ldots, \bar{f}_{\mathsf{o},m}^{[R_m]}(\mathbf{x}, \mathbf{z}, \mathbf{g}_{\mathsf{o},m}^{[R_m]}) \right],  \\
	& \bar{\mathbf{f}}_{\mathsf{c},w}(\mathbf{x}, \mathbf{z}, \mathbf{G}_{\mathsf{c},w})
	=  \notag  \\
	& \qquad \quad 
	\left[ \bar{f}_{\mathsf{c},w}^{[1]}(\mathbf{x}, \mathbf{z}, \mathbf{g}_{\mathsf{c},w}^{[1]}), \ldots, \bar{f}_{\mathsf{c},w}^{[T_w]}(\mathbf{x}, \mathbf{z}, \mathbf{g}_{\mathsf{c},w}^{[T_w]}) \right].
\end{align}

First, we show that the above problem is equivalent to
\begin{subequations} \label{108}
	\begin{align}
		\mathop{\mathrm{maximize}}_{\xi, \mathbf{x}} \;\;
		&\;
		\xi  \\
		\mathop{\mathrm{subject \; to}} \;\;
		&\;
		h_{\mathsf{o},m} \big(\hat{\mathbf{f}}_{\mathsf{o},m}(\mathbf{x})\big)
		\geq \xi, 
		\ \ \forall m,  \label{ineq_01}  \\
		&\;
		h_{\mathsf{c},w}
		\big(\hat{\mathbf{f}}_{\mathsf{c},w} (\mathbf{x})\big) 
		\geq \Gamma_w, \ \forall w,  \label{ineq_02}  \\
		&\;
		\mathbf{x} \in \mathcal{X}, 
	\end{align}
\end{subequations}
where
\begin{align}
	& \hat{\mathbf{f}}_{\mathsf{o},m}(\mathbf{x})
	\triangleq  \notag  
	\\
	& \quad 
	\left[ 
	\max_{\mathbf{z}, \mathbf{g}_{\mathsf{o},m}^{[1]}}
	\bar{f}_{\mathsf{o},m}^{[1]}(\mathbf{x}, \mathbf{z}, \mathbf{g}_{\mathsf{o},m}^{[1]}), 
	\ldots, 
	\max_{\mathbf{z}, \mathbf{g}_{\mathsf{o},m}^{[R_m]}} 
	\bar{f}_{\mathsf{o},m}^{[R_m]}(\mathbf{x}, \mathbf{z}, \mathbf{g}_{\mathsf{o},m}^{[R_m]}) 
	\right],  \label{f_hat_o}
	\\
	& \hat{\mathbf{f}}_{\mathsf{c},w}(\mathbf{x})
	\triangleq  \notag  
	\\
	& \quad 
	\left[ 
	\max_{\mathbf{z}, \mathbf{g}_{\mathsf{c},w}^{[1]}}
	\bar{f}_{\mathsf{c},w}^{[1]}(\mathbf{x}, \mathbf{z}, \mathbf{g}_{\mathsf{c},w}^{[1]}), 
	\ldots, 
	\max_{\mathbf{z}, \mathbf{g}_{\mathsf{c},w}^{[T_w]}}
	\bar{f}_{\mathsf{c},w}^{[T_w]}(\mathbf{x}, \mathbf{z}, \mathbf{g}_{\mathsf{c},w}^{[T_w]}) 
	\right]. \label{f_hat_c}
\end{align}
To show the above equivalence, we show that the feasible set of $\mathbf{x}$ satisfying \eqref{ineq_01} and \eqref{ineq_02}
is the same as that satisfying \eqref{ineq_1} and \eqref{ineq_2}.

Let $\mathbf{x}$ be a point satisfying the constraints \eqref{ineq_1} and \eqref{ineq_2}. 
Then, $\mathbf{x}$ must satisfy the constraints \eqref{ineq_01} and \eqref{ineq_02}.
This is because the latter constraints are defined in terms of maximizations of $\bar{f}_{\mathsf{o},m}^{[r_m]}(\mathbf{x}, \mathbf{z}, \mathbf{g}_{\mathsf{o},m}^{[r_m]})$ and $\bar{f}_{\mathsf{c},w}^{[t_w]}(\mathbf{x}, \mathbf{z}, \mathbf{g}_{\mathsf{c},w}^{[t_w]})$ 
over $\mathbf{z}$, $\mathbf{g}_{\mathsf{o},m}^{[r_m]}$ and $\mathbf{g}_{\mathsf{c},w}^{[t_w]}$.
For $\mathbf{g}_{\mathsf{o},m}^{[r_m]}$ and $\mathbf{g}_{\mathsf{c},w}^{[t_w]}$, they
appear separately in the constraints \eqref{f_hat_o} and \eqref{f_hat_c}. 
For the variable $\mathbf{z}$, although it appears in both  \eqref{f_hat_o} and \eqref{f_hat_c}, for fixed $\mathbf{x}$, the \emph{same} value of $\mathbf{z}$ (namely $\mathbf{z}^{\star} = \mathbf{x}$) maximizes all the ratio functions in \eqref{f_hat_o} and \eqref{f_hat_c}. 
This crucial observation, together with the fact that
the outer functions $h_{\mathsf{o},m}(\cdot)$ and $h_{\mathsf{c},w}(\cdot)$ are non-decreasing in each component, establish the statement. 

Conversely, if $\mathbf{x}$ is a feasible point in the constraints \eqref{ineq_01} and \eqref{ineq_02},
$(\mathbf{x}, \mathbf{z}^{\star}, (\mathbf{g}_{\mathsf{o},m}^{[r_m]})^{\star}, (\mathbf{g}_{\mathsf{c},w}^{[t_w]})^{\star})$ is a feasible point in the constraints \eqref{ineq_1} and \eqref{ineq_2}.
Hence, the feasible set of $\mathbf{x}$ in the problem \eqref{107} is the same as that in the problem \eqref{108}.
This shows that the two problems are equivalent.

Second, we show the equivalence between the problems \eqref{108}
and \eqref{max-min-ratios}.
Based on Lemma~\ref{Lemma_1}, we have
\begin{align}
	\max_{\mathbf{z}, \mathbf{g}_{\mathsf{o},m}^{[r_m]}}
	\bar{f}_{\mathsf{o},m}^{[r_m]}(\mathbf{x}, \mathbf{z}, \mathbf{g}_{\mathsf{o},m}^{[r_m]})
	&=  
	f_{\mathsf{o},m}^{[r_m]}(\mathbf{x}),  \\
	\max_{\mathbf{z}, \mathbf{g}_{\mathsf{c},w}^{[t_w]}}
	\bar{f}_{\mathsf{c},w}^{[t_w]}(\mathbf{x}, \mathbf{z}, \mathbf{g}_{\mathsf{c},w}^{[t_w]})
	&=
	f_{\mathsf{c},w}^{[t_w]}(\mathbf{x}).
\end{align}
Therefore, the problem \eqref{108} is equivalent to
\begin{subequations}
	\begin{align}
		\mathop{\mathrm{maximize}}_{\xi, \mathbf{x}} \;\;
		&\;
		\xi  \\
		\mathop{\mathrm{subject \; to}} \;\;
		&\;
		h_{\mathsf{o},m}
		\big( \mathbf{f}_{\mathsf{o},m}(\mathbf{x}) \big) 
		\geq \xi, 
		\;\;\; \forall m, \\
		&\;
		h_{\mathsf{c},w}
		\big( \mathbf{f}_{\mathsf{c},w} (\mathbf{x})\big) 
		\geq \Gamma_w, \ \forall w,  \\
		&\;
		\mathbf{x} \in \mathcal{X}, 
	\end{align}
\end{subequations}
where
\begin{align}
	\mathbf{f}_{\mathsf{o},m}(\mathbf{x})
	& \triangleq	
	\left[ f_{\mathsf{o},m}^{[1]}(\mathbf{x}), 
	\ldots, 
	f_{\mathsf{o},m}^{[R_m]}(\mathbf{x}) \right],  
	\\
	\mathbf{f}_{\mathsf{c},w}(\mathbf{x})
	& \triangleq	
	\left[ f_{\mathsf{c},w}^{[1]}(\mathbf{x}), 
	\ldots, 
	f_{\mathsf{c},w}^{[T_w]}(\mathbf{x}) \right].
\end{align}
Therefore, the problem \eqref{newmax-minLP} is equivalent to  \eqref{max-min-ratios}.


\section{BCRLB and MSE for Azimuth Angle Estimation}
\label{app:BCRB_vs_MSE}

Unlike classical CRB, the BCRLB is not attainable even asymptotically 
\cite{10584423} for general parameter estimation problems. 
In this appendix, we provide numerical results to show that 
for the specific azimuth angle estimation problem 
in this paper, the BCRLB can be a good approximation of 
the MSE for a maximum a posteriori (MAP) estimator, thus justifying
its use for beampattern optimization.
The simulation is carried out under the same sensing model and design 
assumptions as those used in the main text.
Specifically, the precoding sequence is designed using the proposed 
framework. We consider the single-target estimation problem for both the case of known pathloss coefficient
and the case of unknown pathloss coefficient with a zero-mean prior
distribution. 
The results show that, under the considered model assumptions, the MSE of the MAP estimator closely follows the BCRLB in the moderate-to-high receive-SNR\footnote{For convenience, the SNR is defined to be $\frac{p |\alpha_m|^2}{\varepsilon^2}$ without accounting for the effect of precoding and combining.} regime, as long as the prior distribution of the angle is not too wide.
This supports the use of the BCRLB as a practical surrogate objective for the proposed Bayesian precoder design.

\begin{figure}[!t]
	\centering
	\includegraphics[width = 1\linewidth]{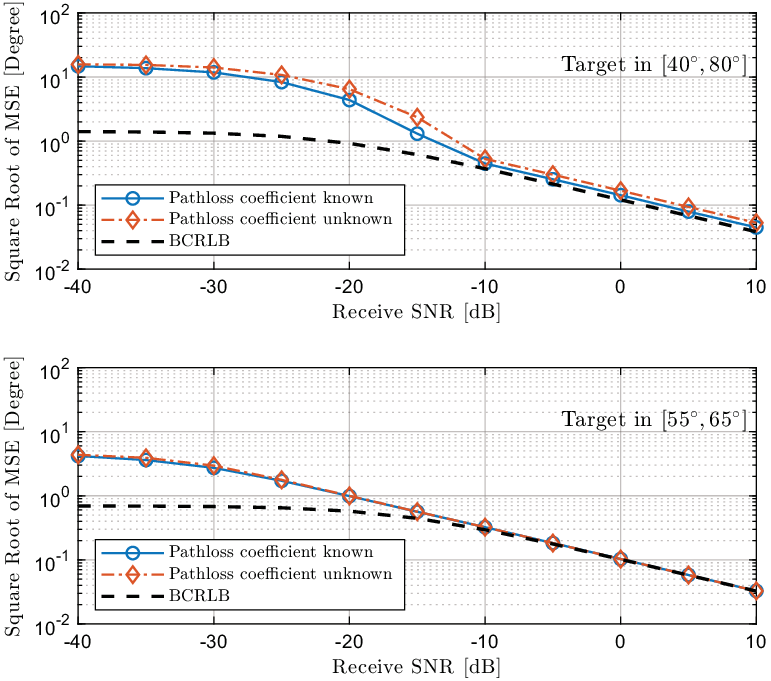}
	\caption{BCRLB and the square root of MSE for azimuth angle estimation with different priors.}
	\label{Figure_12}
\end{figure}

In Fig.~\ref{Figure_12}, we present the MSE of the MAP estimator for target-angle estimation and the corresponding BCRLB under two different target-angle priors.
In this example, a single target is assumed to be located in the angular region $[40^{\circ}, 80^{\circ}]$ or $[55^{\circ}, 65^{\circ}]$, with a uniform prior.
Moreover, three interfering scatterers are assumed to be present: 
two are uniformly distributed in $[100^{\circ}, 120^{\circ}]$ and $[140^{\circ}, 160^{\circ}]$, respectively, while the third is located at a deterministic angle of $130^{\circ}$.
The phase-only precoding sequence across $N=10$ symbol periods is optimized by the proposed framework.

It can be observed from the figure that under both prior distributions, 
the MSE decreases with the receive SNR in a manner consistent with the BCRLB 
and remains close to it in the moderate-to-high SNR regime.
Compared with the prior of $[55^\circ,65^\circ]$, the wider prior $[40^\circ,80^\circ]$ leads to a slightly larger gap from the BCRLB, but the overall there is excellent agreement between the BCRLB and MSE. 
This is true for both the case of known pathloss and the case of unknown pathloss with a zero-mean prior.


At the low-SNR regime, the observations become weakly informative. The estimation accuracy is then mainly governed by the prior information.
In this case, the BCRLB is not a tight approximation of the MSE.

\section{Smooth Approximation of Uniform Priors for BCRLB Computation}
\label{app:smooth_uniform_prior}

The definition of the Bayesian Fisher information in \eqref{prior_FI} requires certain regularity conditions on the prior which are not satisfied by the uniform distribution. 
This appendix shows that we can approximate the probability density function of the uniform distribution using a smoothing taper function. This would allow the Fisher information to be computed. 

The uniform distribution over $[a_m,b_m]$ as defined below 
\begin{equation}
	q_m(\theta_m)
	=
	\frac{1}{b_m-a_m}
	\mathbf{1}_{[a_m,b_m]}(\theta_m),
\end{equation}
has discontinuities at the boundary points \(a_m\) and $b_m$.
In order to satisfy the regularity condition,
we approximate the uniform distribution using a raised-cosine-taper-based smoothing method \cite{10584423}.
Let $\zeta>0$ denote the smoothing width with $0 < \zeta < (b_m-a_m)/{2}$.
We define
\begin{equation}
	h_\zeta(\theta)
	=
	\begin{cases}
		0,  & \theta \leq 0,  \\[1ex]
		\dfrac{1-\cos\left(\pi \theta/\zeta\right)}{2},
		& 0 < \theta < \zeta,  \\[1ex]
		1, & \theta \geq \zeta.
	\end{cases}
\end{equation}
with its derivative as given by
\begin{equation}
	h_\zeta^{\prime}(\theta)
	=
	\begin{cases}
		0, 
		& \hspace{10pt} \theta \leq 0,  \\[1ex]
		\dfrac{\pi \sin\left(\pi \theta/\zeta\right)}{2\zeta},
		& \hspace{10pt} 0 < \theta < \zeta,  \\[1ex]
		0,
		& \hspace{10pt} \theta \geq \zeta.
	\end{cases}
\end{equation}
Then, the uniform prior distribution over $[a_m,b_m]$ can be approximated as
\begin{equation} \label{smooth_f}
	q_{m,\zeta}(\theta_m)
	\triangleq
	\dfrac{h_\zeta(\theta_m-a_m) h_\zeta(b_m-\theta_m)}{\int h_\zeta(x-a_m) h_\zeta(b_m-x) \mathrm{d} x}.
\end{equation}
This approximation removes the discontinuities of the uniform prior while keeping it approximately flat over the interior of its support.
Fig.~\ref{Figure_app_01} presents an example of the raised-cosine approximations of the uniform prior on $[40^\circ,80^\circ]$ with different smoothing widths $\zeta$. 
It can be observed that a smaller $\zeta$ gives a closer approximation to the uniform prior.

\begin{figure}[!t]
	\centering
	\includegraphics[width = 1\linewidth]{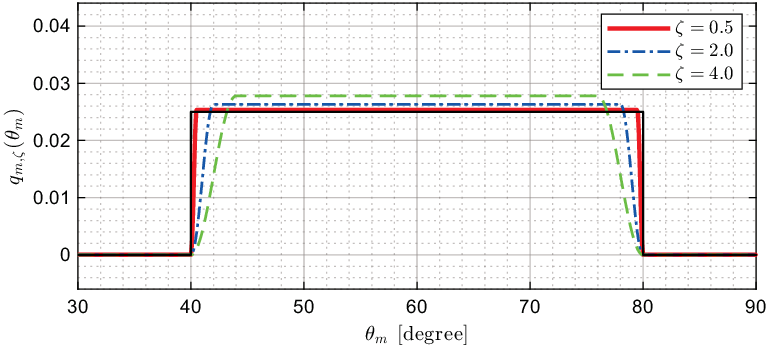}
	\caption{Smoothed uniform distributions on $[40^\circ, 80^\circ]$ with different $\zeta$.}
	\label{Figure_app_01}
\end{figure}

We now consider the Fisher information associated with the smoothed prior distribution $q_{m,\zeta}(\theta_m)$,
defined as the expectation of the squared score function, i.e.,
\begin{align} \label{104}
	\mathbf{\Phi}_{\mathsf{p},\zeta}^{(m,m)}
	&\triangleq
	\mathbb{E}_{q_{m,\zeta}(\theta_m)}
	\left[ \left( 
	\dfrac{\partial}{\partial\theta_m}
	\ln q_{m,\zeta}(\theta_m)
	\right)^2 \right].
\end{align}
Strictly speaking, the above definition requires $ q_{m,\zeta}(\theta_m)$ to be nonzero everywhere, which is not true outside of $(a_m,b_m)$. 
But we can always include a small probability density $\tilde\epsilon$ in the interval outside of $(a_m, b_m)$ to make \eqref{104} well defined. 
This would not contribute to the overall expectation. 


We now compute the Fisher information as in the integral below: 
\begin{equation} \label{integral_ab}
	\mathbf{\Phi}_{\mathsf{p},\zeta}^{(m,m)}
	=
	\int_{a_m}^{b_m}
	s_{m,\zeta}^2(\theta_m)
	q_{m,\zeta}(\theta_m)
	\mathrm{d} \theta_m.
\end{equation}
For $\theta_m \in (a_m, b_m)$, the score function of $q_{m,\zeta}(\theta_m)$ is given by
\begin{equation} \label{score}
	\frac{\partial}{\partial \theta_m}
	\ln q_{m,\zeta}(\theta_m)
	=
	\frac{
		h_\zeta'(\theta_m-a_m)
	}{
		h_\zeta(\theta_m-a_m)
	}
	-
	\frac{
		h_\zeta'(b_m-\theta_m)
	}{
		h_\zeta(b_m-\theta_m)
	}.
\end{equation}
For the first term in \eqref{score}, we have
\begin{multline}
	\frac{h_\zeta'(\theta_m-a_m)}{h_\zeta(\theta_m-a_m)}
	=  \\
	\begin{cases}
		\dfrac{\frac{\pi}{\zeta}\sin\left(\pi(\theta_m-a_m)/\zeta\right)}
		{1-\cos\left(\pi(\theta_m-a_m)/\zeta\right)},
		& a_m < \theta_m < a_m + \zeta,  
		\\[2ex]
		\hspace{1pt}0, & \theta_m \geq a_m + \zeta.
	\end{cases}
\end{multline}
For the second term in \eqref{score}, we have
\begin{multline}
	\frac{h_\zeta'(b_m-\theta_m)}{h_\zeta(b_m-\theta_m)}
	=  \\
	\begin{cases}
		\dfrac{\frac{\pi}{\zeta}\sin\left(\pi(b_m-\theta_m)/\zeta\right)}
		{1-\cos\left(\pi(b_m-\theta_m)/\zeta\right)},
		& b_m-\zeta < \theta_m < b_m,  
		\\[2ex]
		\hspace{1pt}0, & \theta_m \leq b_m-\zeta.
	\end{cases}
\end{multline}
The above two expressions show that, on any closed subinterval of $(a_m,b_m)$, the integrand in \eqref{integral_ab} is always bounded. 
We now examine the behaviour of the integrand when $\theta_m \downarrow a_m$ or $\theta_m \uparrow b_m$ and show that the integral is finite over $[a_m, b_m]$.

First consider the left boundary. 
Let
\begin{equation}
	\epsilon
	=
	\theta_m-a_m .
\end{equation}
As $\theta_m\downarrow a_m$, we have $\epsilon\to 0^+$. 
By using the Taylor expansion, we have
\begin{equation}
	h_\zeta(\epsilon)
	=
	\frac{\pi^2 \epsilon^2}{4\zeta^2}
	+
	O\left(\epsilon^4\right)
	=
	O\left(\epsilon^2\right),
\end{equation}
and
\begin{equation}
	h_\zeta'(\epsilon)
	=
	\frac{\pi^2 \epsilon}{2\zeta^2}
	+
	O\left(\epsilon^3\right)
	=
	O\left(\epsilon\right).
\end{equation}
Therefore, 
\begin{equation}
	\frac{
		h_\zeta'(\epsilon)
	}{
		h_\zeta(\epsilon)
	}
	=
	O\left(
	\frac{1}{\epsilon}
	\right).
\end{equation}

Moreover, 
for $\theta_m$ close to $a_m$ we have 
\begin{equation}
	h_\zeta(b_m-\theta_m)
	=
	1,
	\quad
	h_\zeta'(b_m-\theta_m)
	=
	0.
\end{equation}
Thus, near $a_m$, the second term in \eqref{score} is zero. The score function is therefore determined only by the first term.

Now, by the definition of $q_{m,\zeta}(\theta_m)$, we have 
\begin{equation}
	q_{m,\zeta}(\theta_m)
	=
	O\left(\epsilon^2\right).
\end{equation}
Since the score function is $O(1/\epsilon)$ near $a_m$ and
$q_{m,\zeta}(\theta_m)=O(\epsilon^2)$, we have
\begin{multline}
	\left(
	\frac{\partial}{\partial \theta_m}
	\ln q_{m,\zeta}(\theta_m)
	\right)^2
	q_{m,\zeta}(\theta_m)  \\
	=
	O\left(
	\frac{1}{\epsilon^2}
	\right)
	O\left(\epsilon^2\right)
	=
	O(1),
	\quad
	\theta_m\downarrow a_m .
\end{multline}
Therefore, the integrand in \eqref{integral_ab} is bounded in the right-neighborhood of $a_m$, and its integral over this neighborhood is finite.

The same argument applies to the right boundary $b_m$, with $b_m-\theta_m$ playing the role of $\epsilon$; therefore, the integrand is also bounded near $b_m$ and is integrable there. 
This shows that the integral in \eqref{integral_ab} is finite.

This allows us to compute the Fisher information via numerical integration as follows.
Define a grid over $(a_m,b_m)$:
\begin{equation}
	\theta_{m,i}
	=
	\theta_{m,\min}
	+
	(i-1)\Delta\theta_m,
	\quad
	i=1,\ldots,N_{\theta_m} .
\end{equation}
We compute the unnormalized prior values as
\begin{equation}
	\tilde{q}_{m,\zeta,i}
	=
	h_\zeta(\theta_{m,i}-a_m)
	h_\zeta(b_m-\theta_{m,i}), \quad \forall i.
\end{equation}
Then, the normalized prior is given by
\begin{equation}
	q_{m,\zeta,i}
	=
	\frac{
		\tilde{q}_{m,\zeta,i}
	}{
		\sum_{i=1}^{N_{\theta_m}}
		\tilde{q}_{m,\zeta,i}
		\Delta\theta_m
	},
	\quad
	\forall i.
\end{equation}
Then, a discrete approximation of the Fisher information of the prior distribution $q_{m,\zeta}(\theta_m)$ is given by
\begin{equation}
	\mathbf{\Phi}_{\mathsf{p},\zeta}^{(m,m)}
	=
	\sum_{i=1}^{N_{\theta_m}}
	r_{m,\zeta,i}^2
	q_{m,\zeta,i}
	\Delta\theta_m,
\end{equation}
where
\begin{equation}
	r_{m,\zeta,i}
	=
	\frac{
		h_\zeta'(\theta_{m,i}-a_m)
	}{
		h_\zeta(\theta_{m,i}-a_m)
	}
	-
	\frac{
		h_\zeta'(b_m-\theta_{m,i})
	}{
		h_\zeta(b_m-\theta_{m,i})
	}.
\end{equation}

\bibliographystyle{IEEEtran} 
\bibliography{reference}

\begin{thebibliography}{10}
\providecommand{\url}[1]{#1}
\csname url@samestyle\endcsname
\providecommand{\newblock}{\relax}
\providecommand{\bibinfo}[2]{#2}
\providecommand{\BIBentrySTDinterwordspacing}{\spaceskip=0pt\relax}
\providecommand{\BIBentryALTinterwordstretchfactor}{4}
\providecommand{\BIBentryALTinterwordspacing}{\spaceskip=\fontdimen2\font plus
\BIBentryALTinterwordstretchfactor\fontdimen3\font minus
  \fontdimen4\font\relax}
\providecommand{\BIBforeignlanguage}[2]{{%
\expandafter\ifx\csname l@#1\endcsname\relax
\typeout{** WARNING: IEEEtran.bst: No hyphenation pattern has been}%
\typeout{** loaded for the language `#1'. Using the pattern for}%
\typeout{** the default language instead.}%
\else
\language=\csname l@#1\endcsname
\fi
#2}}
\providecommand{\BIBdecl}{\relax}
\BIBdecl

\bibitem{LiuISIT}
Y.~Liu and W.~Yu, ``Precoder design for {MIMO} sensing with scatterers,'' in
  \emph{IEEE Int. Symp. Inf. Theory (ISIT)}, Jun. 2026.

\bibitem{6736761}
E.~G. Larsson, O.~Edfors, F.~Tufvesson, and T.~L. Marzetta, ``Massive {MIMO}
  for next generation wireless systems,'' \emph{IEEE Commun. Mag.}, vol.~52,
  no.~2, pp. 186--195, Feb. 2014.

\bibitem{9737357}
F.~Liu, Y.~Cui, C.~Masouros, J.~Xu, T.~X. Han, Y.~C. Eldar, and S.~Buzzi,
  ``Integrated sensing and communications: Toward dual-functional wireless
  networks for {6G} and beyond,'' \emph{IEEE J. Sel. Areas Commun.}, vol.~40,
  no.~6, pp. 1728--1767, Jun. 2022.

\bibitem{1367565}
C.~Doan, S.~Emami, D.~Sobel, A.~Niknejad, and R.~Brodersen, ``Design
  considerations for 60 {GHz} {CMOS} radios,'' \emph{IEEE Commun. Mag.},
  vol.~42, no.~12, pp. 132--140, Dec. 2004.

\bibitem{4276989}
P.~Stoica, J.~Li, and Y.~Xie, ``On probing signal design for {MIMO} radar,''
  \emph{IEEE Trans. Signal Process.}, vol.~55, no.~8, pp. 4151--4161, Aug.
  2007.

\bibitem{5765721}
S.~Ahmed, J.~S. Thompson, Y.~R. Petillot, and B.~Mulgrew, ``Unconstrained
  synthesis of covariance matrix for {MIMO} radar transmit beampattern,''
  \emph{IEEE Trans. Signal Process.}, vol.~59, no.~8, pp. 3837--3849, Aug.
  2011.

\bibitem{4359542}
J.~Li, L.~Xu, P.~Stoica, K.~W. Forsythe, and D.~W. Bliss, ``Range compression
  and waveform optimization for {MIMO} radar: A {CramÉr–Rao} bound based
  study,'' \emph{IEEE Trans. Signal Process.}, vol.~56, no.~1, pp. 218--232,
  Jan. 2008.

\bibitem{10138058}
X.~Song, J.~Xu, F.~Liu, T.~X. Han, and Y.~C. Eldar, ``Intelligent reflecting
  surface enabled sensing: {Cramér-Rao} bound optimization,'' \emph{IEEE
  Trans. Signal Process.}, vol.~71, pp. 2011--2026, 2023.

\bibitem{8386661}
F.~Liu, L.~Zhou, C.~Masouros, A.~Li, W.~Luo, and A.~Petropulu, ``Toward
  dual-functional radar-communication systems: Optimal waveform design,''
  \emph{IEEE Trans. Signal Process.}, vol.~66, no.~16, pp. 4264--4279, Aug.
  2018.

\bibitem{9124713}
X.~Liu, T.~Huang, N.~Shlezinger, Y.~Liu, J.~Zhou, and Y.~C. Eldar, ``Joint
  transmit beamforming for multiuser {MIMO} communications and {MIMO} radar,''
  \emph{IEEE Trans. Signal Process.}, vol.~68, pp. 3929--3944, 2020.

\bibitem{9652071}
F.~Liu, Y.-F. Liu, A.~Li, C.~Masouros, and Y.~C. Eldar, ``{Cramér-Rao} bound
  optimization for joint radar-communication beamforming,'' \emph{IEEE Trans.
  Signal Process.}, vol.~70, pp. 240--253, 2022.

\bibitem{2509.13661}
\BIBentryALTinterwordspacing
K.~M. Attiah and W.~Yu, ``Uplink-downlink duality for beamforming in integrated
  sensing and communications,'' \emph{Accepted in IEEE J. Sel. Areas Inf.
  Theory}, 2026. [Online]. Available: \url{https://arxiv.org/abs/2509.13661}
\BIBentrySTDinterwordspacing

\bibitem{10147248}
Y.~Xiong, F.~Liu, Y.~Cui, W.~Yuan, T.~X. Han, and G.~Caire, ``On the
  fundamental tradeoff of integrated sensing and communications under
  {Gaussian} channels,'' \emph{IEEE Trans. Inf. Theory}, vol.~69, no.~9, pp.
  5723--5751, Sep. 2023.

\bibitem{10440056}
X.~Song, X.~Qin, J.~Xu, and R.~Zhang, ``{Cramér-Rao} bound minimization for
  {IRS}-enabled multiuser integrated sensing and communications,'' \emph{IEEE
  Trans. Wireless Commun.}, vol.~23, no.~8, pp. 9714--9729, Aug. 2024.

\bibitem{9454375}
S.~Buzzi, E.~Grossi, M.~Lops, and L.~Venturino, ``Radar target detection aided
  by reconfigurable intelligent surfaces,'' \emph{IEEE Signal Process. Lett.},
  vol.~28, pp. 1315--1319, 2021.

\bibitem{9732186}
------, ``Foundations of {MIMO} radar detection aided by reconfigurable
  intelligent surfaces,'' \emph{IEEE Trans. Signal Process.}, vol.~70, pp.
  1749--1763, 2022.

\bibitem{9769997}
R.~Liu, M.~Li, Y.~Liu, Q.~Wu, and Q.~Liu, ``Joint transmit waveform and passive
  beamforming design for {RIS}-aided {DFRC} systems,'' \emph{IEEE J. Sel.
  Topics Signal Process.}, vol.~16, no.~5, pp. 995--1010, Aug. 2022.

\bibitem{10364735}
R.~Liu, M.~Li, Q.~Liu, and A.~Lee~Swindlehurst, ``{SNR/CRB}-constrained joint
  beamforming and reflection designs for {RIS-ISAC} systems,'' \emph{IEEE
  Trans. Wireless Commun.}, vol.~23, no.~7, pp. 7456--7470, Jul. 2024.

\bibitem{6748061}
S.~Ahmed and M.-S. Alouini, ``Mimo-radar waveform covariance matrix for high
  {SINR} and low side-lobe levels,'' \emph{IEEE Trans. Signal Process.},
  vol.~62, no.~8, pp. 2056--2065, Apr. 2014.

\bibitem{7953658}
B.~Li and A.~P. Petropulu, ``Joint transmit designs for coexistence of {MIMO}
  wireless communications and sparse sensing radars in clutter,'' \emph{IEEE
  Trans. Aerosp. Electron. Syst.}, vol.~53, no.~6, pp. 2846--2864, Dec. 2017.

\bibitem{6649991}
G.~Cui, H.~Li, and M.~Rangaswamy, ``{MIMO} radar waveform design with constant
  modulus and similarity constraints,'' \emph{IEEE Trans. Signal Process.},
  vol.~62, no.~2, pp. 343--353, Jan. 2014.

\bibitem{8141978}
Z.~Cheng, Z.~He, B.~Liao, and M.~Fang, ``{MIMO} radar waveform design with
  {PAPR} and similarity constraints,'' \emph{IEEE Trans. Signal Process.},
  vol.~66, no.~4, pp. 968--981, Feb. 2018.

\bibitem{8892631}
Z.~Cheng, B.~Liao, S.~Shi, Z.~He, and J.~Li, ``Co-design for overlaid {MIMO}
  radar and downlink {MISO} communication systems via {Cramér–Rao} bound
  minimization,'' \emph{IEEE Trans. Signal Process.}, vol.~67, no.~24, pp.
  6227--6240, Dec. 2019.

\bibitem{Liu2506:MIMO}
Y.~Liu and W.~Yu, ``{MIMO} sensing beamforming design with low-resolution
  transceivers,'' in \emph{IEEE Int. Conf. Commun. (ICC)}, Montreal, Canada,
  Jun. 2025.

\bibitem{9266762}
Z.~Cheng, S.~Shi, L.~Tang, Z.~He, and B.~Liao, ``Waveform design for collocated
  {MIMO} radar with high-mix-low-resolution {ADCs},'' \emph{IEEE Trans. Signal
  Process.}, vol.~69, pp. 28--41, 2021.

\bibitem{9399801}
Z.~Cheng, S.~Shi, Z.~He, and B.~Liao, ``Transmit sequence design for
  dual-function radar-communication system with one-bit {DACs},'' \emph{IEEE
  Trans. Wireless Commun.}, vol.~20, no.~9, pp. 5846--5860, Sep. 2021.

\bibitem{10901673}
Y.~Wang and S.~Zhang, ``Hybrid beamforming design for integrated sensing and
  communication exploiting prior information,'' in \emph{IEEE Global Commun.
  Conf. (GLOBECOM)}, 2024, pp. 4576--4581.

\bibitem{6541985}
W.~Huleihel, J.~Tabrikian, and R.~Shavit, ``Optimal adaptive waveform design
  for cognitive {MIMO} radar,'' \emph{IEEE Trans. Signal Processing}, vol.~61,
  no.~20, pp. 5075--5089, 2013.

\bibitem{10901183}
J.~Yao and S.~Zhang, ``Optimal transmit signal design for multi-target {MIMO}
  sensing exploiting prior information,'' in \emph{IEEE Global Commun. Conf.
  (GLOBECOM)}, Cape Town, South Africa, Dec. 2024.

\bibitem{10584278}
C.~Xu and S.~Zhang, ``{MIMO} integrated sensing and communication exploiting
  prior information,'' \emph{IEEE J. Sel. Areas Commun.}, vol.~42, no.~9, pp.
  2306--2321, Sept. 2024.

\bibitem{skolnik2001radar}
M.~I. Skolnik, \emph{Introduction to Radar Systems}, 3rd~ed.\hskip 1em plus
  0.5em minus 0.4em\relax New York, NY, USA: McGraw-Hill, 2001.

\bibitem{6847111}
A.~Alkhateeb, O.~El~Ayach, G.~Leus, and R.~W. Heath, ``Channel estimation and
  hybrid precoding for millimeter wave cellular systems,'' \emph{IEEE J. Sel.
  Topics Signal Process.}, vol.~8, no.~5, pp. 831--846, Oct. 2014.

\bibitem{8750903}
Y.~Shabara, C.~E. Koksal, and E.~Ekici, ``Beam discovery using linear block
  codes for millimeter wave communication networks,'' \emph{IEEE/ACM Trans.
  Netw.}, vol.~27, no.~4, pp. 1446--1459, Aug. 2019.

\bibitem{10124207}
T.~Jiang, F.~Sohrabi, and W.~Yu, ``Active sensing for two-sided beam alignment
  and reflection design using ping-pong pilots,'' \emph{IEEE J. Sel. Areas Inf.
  Theory}, vol.~4, pp. 24--39, May 2023.

\bibitem{11114787}
Y.~Liu, K.~M. Attiah, and W.~Yu, ``{RIS}-assisted joint sensing and
  communications via fractionally constrained fractional programming,''
  \emph{IEEE Trans. Wireless Commun.}, vol.~25, pp. 1674--1689, 2026.

\bibitem{rockafellar1970convex}
R.~T. Rockafellar, \emph{Convex Analysis}.\hskip 1em plus 0.5em minus
  0.4em\relax Princeton, NJ, USA: Princeton University Press, 1970.

\bibitem{10605808}
K.~Shen, Z.~Zhao, Y.~Chen, Z.~Zhang, and H.~Victor~Cheng, ``Accelerating
  quadratic transform and {WMMSE},'' \emph{IEEE J. Sel. Areas Commun.},
  vol.~42, no.~11, pp. 3110--3124, 2024.

\bibitem{mosek2019}
{MOSEK ApS}, ``{MOSEK optimization toolbox for MATLAB},'' \emph{{U}ser’s
  Guide Reference Manual}, vol.~4, no.~1, 2019.

\bibitem{8314727}
K.~Shen and W.~Yu, ``Fractional programming for communication systems—{P}art
  {I}: Power control and beamforming,'' \emph{IEEE Trans. Signal Process.},
  vol.~66, no.~10, May 2018.

\bibitem{7547360}
Y.~Sun, P.~Babu, and D.~P. Palomar, ``Majorization-minimization algorithms in
  signal processing, communications, and machine learning,'' \emph{IEEE Trans.
  Signal Process.}, vol.~65, no.~3, pp. 794--816, Feb. 2017.

\bibitem{10584423}
O.~Aharon and J.~Tabrikian, ``Asymptotically tight {Bayesian} {Cramér-Rao}
  bound,'' \emph{IEEE Trans. Signal Process.}, vol.~72, pp. 3333--3346, 2024.

\end{thebibliography}

\vfill

\end{document}